\newcommand{\indexlabeling}{\eta}
\newcommand{\afunction}{f}
\newcommand{\tensorlabeling}{\mathbf{g}}
\newcommand{\classformulas}{\mathcal{F}}
\newcommand{\variableset}{X}
\newcommand{\Yvariableset}{Y}
\newcommand{\C}{{\mathbb{C}}}
\newcommand{\R}{\mathbb{R}}
\newcommand{\carvingdecomposition}{\mathcal{T}}
\newcommand{\carvingwidth}{\mathit{carw}}
\newcommand{\treewidth}{\mathbold{tw}}
\newtheorem{theorem}{Theorem}[section]
\newtheorem{proposition}[theorem]{Proposition}
\newtheorem{lemma}[theorem]{Lemma}
\newtheorem{observation}[theorem]{Observation}
\newtheorem{definition}[theorem]{Definition}
\newcommand{\N}{{\mathbb{N}}}
\newcommand{\edfunction}{\delta}
\newcommand{\nodes}{\mathit{nodes}}
\newcommand{\arcs}{\mathit{arcs}}
\newcommand{\ket}[1]{|#1\rangle}
\newcommand{\bra}[1]{\langle #1 |}
\newcommand{\graph}{G}
\newcommand{\finalvalue}{\mathcal{V}}
\newcommand{\inputvertices}{\mathit{In}}
\newcommand{\outputvertices}{\mathit{Out}}
\newcommand{\internalvertices}{\mathit{Mid}}
\newcommand{\atensor}{g}
\newcommand{\indexset}{\mathcal{I}}
\newcommand{\rank}{\mathit{rank}}
\newcommand{\leaves}{\mathit{leaves}}
\newcommand{\width}{\mathbf{w}}
\newcommand{\vertexlabelingfunction}{{\theta}}
\newcommand{\edgelabelingfunction}{{\xi}}
\newcommand{\degree}{{\mathrm{deg}}}
\newcommand{\totaldegree}{{\mathrm{tdeg}}}
\newcommand{\tensorContraction}{\mathit{Contr}}
\newcommand{\tensornetwork}{\mathcal{N}}
\newcommand{\trace}{{\mathit{Tr}}}
\newcommand{\valuevertex}{\valuevertex}
\newcommand{\monomials}{\mathcal{M}}
\newcommand{\realpart}{a}
\newcommand{\complexpart}{b}
\newcommand{\realpartvar}{a}
\newcommand{\complexpartvar}{b}
\newcommand{\boolassignment}{\alpha}
\newcommand{\polynomial}{p}
\begin{document}

\title{A Near-Quadratic Lower Bound for the Size of Quantum Circuits of Constant Treewidth}

\author{Mateus de Oliveira Oliveira \\ University of Bergen \\ mateus.oliveira@uib.no}

\maketitle

\begin{abstract}
We show that any quantum circuit of treewidth $t$, built from $r$-qubit gates, requires 
at least $\Omega(\frac{n^{2}}{2^{O(r\cdot t)}\cdot \log^4 n})$ gates to compute the 
element distinctness function. Our result generalizes a near-quadratic lower bound for quantum formula size 
obtained by Roychowdhury and Vatan [SIAM J. on Computing, 2001]. 
The proof of our lower bound follows by an extension of Ne\v{c}iporuk's method  to the context of 
quantum circuits of constant treewidth. This extension is made via a combination of 
techniques from structural graph theory, tensor-network theory, and the connected-component counting 
method, which is a classic tool in algebraic geometry. 
\\
\\
{{\bf Keywords: } Super-Linear Lower Bounds, Quantum Circuits, Algebraic Tensor Networks, \\
\hphantom{blablablablab} Treewidth}
\end{abstract}

\section{Introduction}

Proving superlinear lower bounds on the size of circuits computing some function in NP 
remains one of the greatest challenges of computational complexity theory 
\cite{FindGolovnevHirschKulikov2015,IwamaMorizumi2002,Lachish2001Explicit}. Currently, the best known lower 
bound for a function in NP  is of the order of $5n-o(1)$ for Boolean circuits with gates from the binary De-Morgan basis 
\cite{IwamaMorizumi2002,Lachish2001Explicit} and of the order of $(3+1/86)n+o(n)$ for Boolean circuits 
with arbitrary fan-in-2 gates \cite{FindGolovnevHirschKulikov2015}. 
Therefore, research in this direction has focused on lower bounds for restricted classes of circuits. 
In particular, superlinear lower bounds have been proved for Boolean formulas, and for formulas
constructed from non-Boolean gates. 
The strongest known size lower bound for Boolean formulas over the complete binary basis, which is 
of the order of $\Omega(n^2/\log n)$, is due to Ne\v{c}iporuk~\cite{Neciporuk1966} and remains unimproved for four decades.
If we restrict ourselves to formulas over the De Morgan basis ($\wedge$,$\vee$,$\neg$), then 
the best known lower bound is of the order of $n^{3-o(1)}$ \cite{Hastad1998}. 
Tur\'{a}n and Vatan proved an $\Omega(n^2/\log^2 n)$ size lower bound for arithmetic formulas, 
and an $\Omega(n^{3/2}/\log n)$ size lower bound for threshold formulas \cite{TuranVatan1997computation}. 
Yao introduced the notion of quantum formulas 
(i.e. quantum circuits whose whose underlying graph is a tree) and proved a slightly superlinear
lower bound on the size of quantum formulas computing the majority function \cite{Yao1993}. 
Subsequently, Roychowdhury and Vatan proved an $\Omega(n^2/\log^2 n)$ size lower bound for quantum 
formulas \cite{RoychowdhuryVatan2001quantum}.

The treewidth of a graph is a parameter that has played a central role in several branches 
of algorithmics, combinatorics and structural graph theory 
\cite{RobertsonSeymour1984Treewidth,DemaineFominHajiaghayiThilikos2005,ArnborgLagergrenSeese1991,
ArnborgProskurowski1989,Courcelle1990Monadic}. The notion of treewidth has also caught attention from 
the circuit complexity community due to the fact that the satisfiability of read-once\footnote{A circuit or formula 
is read-once if each variable labels at most one input vertex.} Boolean circuits
of constant treewidth can be determined in polynomial time \cite{AlekhnovichRazborov2002,AllenderChenLouPeriklisPapakonstantinouTang2014,
BroeringLokamSatyanarayana2004,GalJing-Tang2012,GeorgiouKonstantinosPapakonstantinou2008,HeLiangSarma2010,JansenSarma2010}.
Recently, near-quadratic lower bounds were shown for Boolean circuits of constant treewidth \cite{deOliveiraOliveira2015Satisfiability}. 
In the context of quantum computation, it has been shown that the satisfiability of read-once quantum circuits of constant 
treewidth can be determined in polynomial time \cite{deOliveiraOliveira2015Satisfiability}. Additionally, in a 
pioneering result, Markov and Shi have shown that quantum circuits of constant treewidth can be simulated with 
multiplicative precision in polynomial time \cite{MarkovShi2008}.

In this work we prove near-quadratic size lower bounds for quantum circuits of constant treewidth. 
More precisely, our main result (Theorem \ref{theorem:MainTheoremQuantumCircuits}) states that 
any quantum circuit of treewidth $t$, built from $r$-qubit gates, 
requires at least $\Omega(\frac{n^2}{2^{O(r\cdot t)}\cdot \log^{4} n})$ gates to compute the 
$n$-bit element distinctness function. 
In particular, our result imply near-quadratic size lower bounds for several natural restrictions 
of circuits. For instance, formulas have treewidth at most $1$, TTSP 
series-parallel\footnote{Another notion of series-parallel circuits studied in circuit complexity 
theory is the notion of Valiant series parallel circuits, for which no superlinear lower bounds are 
known \cite{Valiant1977,Calabro2008}.} circuits have treewidth at most $2$, and $k$-outerplanar 
circuits have treewidth $O(k)$. Additionally, our result implies superlinear lower bounds even for circuits of treewidth 
$c\cdot \log n$ for some sufficiently small constant $c$.
Our lower bound can be regarded as a simultaneous generalization of  
superlinear lower bounds provided in \cite{RoychowdhuryVatan2001quantum} for the size of quantum formulas 
and in $\cite{deOliveiraOliveira2016Tradeoffs}$ for the size of Boolean circuits of constant treewidth. 

It is worth noting that our results do not follow from previous super-linear lower bounds. 
Although it has been shown that quantum formulas of size $S$ can be simulated 
by Boolean {\em circuits} of size $S^{O(1)}$ \cite{RoychowdhuryVatan2001quantum}, it is a 
long-standing open problem to determine whether quantum formulas can be polynomially simulated 
by Boolean {\em formulas} of size $S^{O(1)}$.  Such an efficient simulation result has 
been been obtained only for read-once quantum formulas \cite{ConsentinoKothariPaetznick2013}. Nevertheless, 
the techniques in \cite{ConsentinoKothariPaetznick2013} fail if the read-once condition is removed. 
Similarly, it has been shown in \cite{MarkovShi2008} that quantum circuits of treewidth $t$ and size $S$ 
can be simulated by Boolean circuits of size $2^{O(t)}\cdot S^{O(1)}$. Nevertheless the Boolean circuits 
obtained by the simulation in \cite{MarkovShi2008} have unbounded treewidth due to the fact that
this simulation uses multiplication of large numbers. Indeed, it is an open problem 
to determine whether quantum circuits of treewidth $t$ can be 
polynomially simulated by Boolean circuits of treewidth $f(t)$ for some function $f:\N\rightarrow \N$.
Therefore, our superlinear lower bounds for quantum circuits of constant treewidth do 
not follow from superlinear lower bounds for Boolean circuits of constant treewidth 
obtained in \cite{deOliveiraOliveira2016Tradeoffs}. Additionally, it is not known either 
whether quantum (resp. Boolean) circuits of treewidth $t$ can be polynomially simulated 
by quantum (resp. Boolean) circuits of treewidth $t-1$. In particular, it is not known whether 
quantum circuits of treewidth $t$ can be polynomially simulated by quantum formulas. Therefore,
our results are not implied by the superlinear lower bounds for quantum formulas obtained 
in \cite{RoychowdhuryVatan2001quantum}.

\section{Proof Techniques}
\label{section:ProofTechniques}

To prove our lower bound, we will extend Ne\v{c}iporuk's method to 
the context of quantum circuits of constant treewidth. This method, which was 
originally devised by Ne\v{c}iporuk to prove superlinear size lower bounds 
for Boolean formulas \cite{Neciporuk1966}, has been generalized to several models of 
computation, including arithmetic and threshold formulas \cite{TuranVatan1997computation}, 
quantum formulas \cite{RoychowdhuryVatan2001quantum} and Boolean circuits of 
constant treewidth \cite{deOliveiraOliveira2016Tradeoffs}. However, to extend Ne\v{c}iporuk's 
method to the context of quantum circuits of constant treewidth, we will need to 
introduce new tools which combine techniques from structural graph theory, 
tensor network theory, and algebraic geometry.

The challenging part in generalizing Ne\v{c}iporuk's method to a class of formulas $\classformulas$
is a step which has been termed {\em path squeezing} in \cite{RoychowdhuryVatan2001quantum}. Intuitively 
this step is used to show that if a function $f:\{0,1\}^{Y}\rightarrow \{0,1\}$ can be 
computed by a formula $F\in \classformulas$ which has at most $l$ leaves labeled with variables 
in $Y$, then $f$ can also be computed by a formula in $\classformulas$ of size at most $l^{O(1)}$. 
While this step can be solved easily on Boolean formulas, path squeezing becomes highly non-trivial 
on formulas with non-boolean gates, such as arithmetic and threshold formulas \cite{TuranVatan1997computation} 
and quantum formulas \cite{RoychowdhuryVatan2001quantum}. The interest in path squeezing stems from 
the fact that it allows us to establish an upper bound for the number of functions computable by 
formulas with at most $l$ input nodes labeled with variables. 

The path squeezing technique is intrinsic to formulas and does not generalize to Boolean circuits nor 
to Quantum circuits of treewidth $t>1$. This drawback was circumvented in \cite{deOliveiraOliveira2016Tradeoffs} for Boolean circuits
of constant treewidth. Although it is not known whether Boolean circuits of treewidth $t$ with $l$ inputs labeled 
by variables can be squeezed into a Boolean circuits of treewidth $t$ and size $l^{O(1)}$, 
it was shown in \cite{deOliveiraOliveira2016Tradeoffs} that 
each such circuit $C$ can always be compactly represented by a constraint satisfaction problem (CSP) with 
$O(t\cdot l)$ constant-width constraints representing the same function as $C$. This is enough to establish 
an upper bound on the number of Boolean functions which can be computed by circuits of constant 
treewidth with at most $l$ input vertices labeled with variables. Unfortunately, the mapping
from circuits to CSPs does not generalize to the context of quantum circuits. 

To provide an analog squeezing technique for quantum circuits of constant treewidth, we will
generalize the notion of tensor network, which is widespread in quantum physics \cite{MarkovShi2008,Orus2014practical}, to 
the notion of algebraic tensor network. We will show that if a Boolean function $f:\{0,1\}^Y\rightarrow \{0,1\}$
can be computed by a Quantum circuit of treewidth at most $t$ with at most $l$ inputs labeled by variables
in $Y$, then such function $f$ can also be represented by an algebraic tensor network of rank $O(t)$ and 
size $O(t\cdot l)$. This step requires the development of a new contraction technique for tensor networks
that may be of independent interest. In order to upper bound the number of functions that can be represented by algebraic 
tensor networks of such size and rank, we will employ the connected component counting method, a classic 
tool in algebraic geometry introduced by Warren \cite{Warren1968lower}.

\section{Preliminaries}

We assume familiarity with basic concepts of quantum computation (see for instance \cite{NielsenChuang2010}).
For completeness, we briefly define the notion of quantum circuit. A {\em qubit} 
is a unit vector in $\C^{2}$. We let $\{\ket{0},\ket{1}\}$ be the standard orthonormal basis of $\C^2$.
A $k$-qubit {\em quantum gate} is a unitary matrix $U\in \C^{2^{k}\times 2^{k}}$. 
A $1$-qubit measurement element is a matrix $M\in \C^{2\times 2}$ such that both $M$ and $I-M$ are positive semidefinite. 
A quantum circuit over a set of variables $X$ is a directed acyclic graph (DAG)
$C = (V,E,\vertexlabelingfunction,\edgelabelingfunction)$, where 
$V$ is a set of vertices, $E$ is a set of edges, 
$\vertexlabelingfunction$ is a function that labels vertices in $V$ with quantum gates, with 
variables in $X$ or with some element in $\{\ket{0},\ket{1}\}$, and $\edgelabelingfunction:E\rightarrow \{1,...,|E|\}$
is a bijection that labels edges in $E$ with numbers in $\{1,...,E\}$. 
The vertex set is partitioned into a set of input vertices $\inputvertices$, a set of 
internal vertices $\internalvertices$, and a set of output vertices $\outputvertices$. 
A quantum circuit is subject to the following constraints.  

\begin{enumerate}
	\item If $v$ is an input vertex, then $v$ has in-degree $0$ and out-degree $1$. Additionally, 
		$\vertexlabelingfunction(v)\in X\cup \{\ket{0},\ket{1}\}$. 
	\item If $v$ is an internal vertex, then for some $k$, $v$ has $k$ in-neighbours and $k$-out neighbours. 
		additionally, $\vertexlabelingfunction(v)$ is a unitary gate acting on $k$ qubits.
	\item If $v$ is an output vertex, then $v$ has in-degree $1$ and out-degree $0$. Additionally, 
		 $\vertexlabelingfunction(v)$ is a $1$-qubit measurement element. 
\end{enumerate}

We note that a quantum circuit may have multiple edges with same source vertex and target vertex. We also note that a variable 
$x\in \variableset$ may label several input nodes of $C$ (Fig. \ref{figure:quantumCircuit}).

\begin{figure}[hf]
\centering
\includegraphics[scale=0.50]{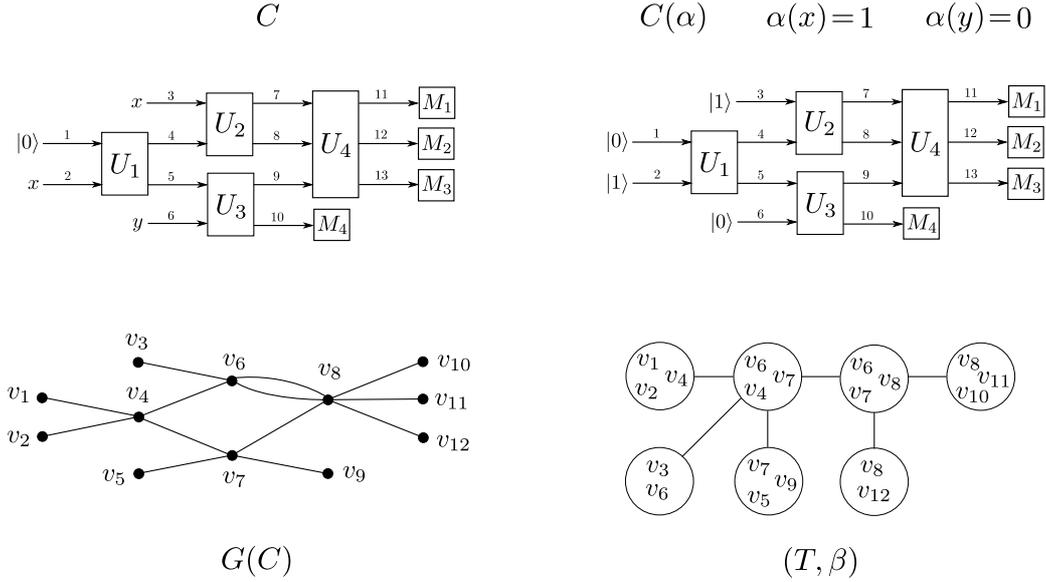}
\caption{A quantum circuit $C$ over a set of variables $\variableset =\{x,y\}$. The quantum circuit $C(\alpha)$ is 
obtained by initializing the inputs of $C$ according to the assignment $\alpha$ which sends $x$ to $1$ and $y$ to $0$. 
$\graph(C)$ is the graph associated with associated with $C$. $(T,\beta)$ is a tree decomposition of $\graph(C)$ of 
width $2$.}
\label{figure:quantumCircuit}
\end{figure}

We will use quantum circuits as a model of computation for Boolean functions. 
A Boolean assignment for a set of variables $\variableset$ is a function $\alpha:X\rightarrow \{0,1\}$. 
We denote by $\{0,1\}^{\variableset}$ the set of all Boolean assignments for $\variableset$. 
A Boolean function over $\variableset$ is a function $f:\{0,1\}^X\rightarrow \{0,1\}$. 
If $C$ is a quantum circuit with $m$ input vertices, then the internal vertices of $C$ naturally define a unitary 
matrix $U_{C} \in \C^{2^{m}\times 2^{m}}$ and the output vertices of $C$ 
define a measurement element $M = \bigotimes_{u\in\outputvertices} \vertexlabelingfunction(u)$ in $\C^{2^{m}\times 2^{m}}$. 
Additionally, if all input nodes of $C$ are labeled with qubits in $\{\ket{0},\ket{1}\}$, then 
these input nodes define a basis state $\ket{\psi}=\bigotimes_{u\in \inputvertices}\vertexlabelingfunction(u)$ 
in $\C^{2^m}$. In this case, the output probability of $C$ is defined as 
$\mathit{Pr}(C) = \mathit{Tr}(U_C\ket{\psi}\bra{\psi}U_C^{\dagger},M_C)$. On the other hand, if some input 
nodes of $C$ are labeled with variables in $\variableset$, and $\alpha\in \{0,1\}^{\variableset}$ is a 
Boolean assignment for $\variableset$, then we let $C(\alpha)$ be the 
quantum circuit obtained by initializing each input vertex whose label is a variable $x\in \variableset$
with the basis state $\ket{\alpha(x)}$ (Fig. \ref{figure:quantumCircuit}). 
The output probability of $C$ on input $\alpha$ is defined as the 
output probability of the circuit $C(\alpha)$.

\begin{definition}[Function Computed by a Quantum Circuit]
\label{definition:FunctionFromCircuit}
We say that a quantum circuit $C$ over a set of variables $\variableset$ 
computes a Boolean function $f:\{0,1\}^{\variableset}\rightarrow \{0,1\}$ if the following conditions 
are satisfied for each assignment $\alpha\in \{0,1\}^{\variableset}$.
\begin{enumerate}
	\item If $f(\alpha) = 1$ then $\mathit{Pr}(C(\alpha))> 1/2$. 
	\item If $f(\alpha) = 0$ then $\mathit{Pr}(C(\alpha))< 1/2$. 
\end{enumerate}
\end{definition}

If $C$ is a quantum circuit, then we let $\graph(C)$ be the underlying undirected graph of $C$, which is 
obtained by forgetting edge directions as well as vertex and edge labels. We note that the multiplicities 
of edges of $C$ are preserved in $\graph(C)$ (Fig. \ref{figure:quantumCircuit}). 

\begin{definition}
\label{definition:TreeDecomposition}
Let $G = (V,E)$ be an undirected graph, possibly containing multiple edges. 
A tree decomposition of $G$ is a pair $(T,\beta)$ where $T$ is a tree, 
and $\beta:\nodes(T)\rightarrow 2^{V}$ satisfying the following properties.
\begin{itemize}
	\item $\bigcup_{u\in \nodes(T)} \beta(u) =  V$, 
	\item for every edge $\{v,v'\} \in E$, there is a node $u\in \nodes(T)$ such that $\{v,v'\} \subseteq \beta(u)$,
	\item for every vertex $v\in V$, the set $\{u\in \nodes(T) \;|\; v\in \beta(u)\}$ induces a connected 
		subtree of $T$. 
\end{itemize}
\end{definition}

The {\em width} of a tree decomposition $(T,\beta)$ is defined as  $\width(T,\beta)=\max_{u}\{|\beta(u)|-1\}$.
The {\em treewidth} of $G$, denoted by $\treewidth(G)$, is the minimum width of a tree decomposition of $G$.
The treewidth of a quantum circuit $C$ is defined as the treewidth of its underlying undirected graph $\graph(C)$ 
(Fig. \ref{figure:quantumCircuit}).

\section{Algebraic Tensors and Algebraic Tensor-Networks}
\label{section:TensorNetwork}

Tensors and tensor-networks have been used as a fundamental tool for the simulation of quantum 
systems and quantum circuits \cite{MarkovShi2008,Orus2014practical}. In this section we define 
the notions of {\em algebraic tensors} and {\em algebraic tensor networks}. While a tensor 
is a multidimensional array of complex numbers, an algebraic tensor is a multidimensional array of complex 
polynomials. An algebraic tensor network is a collection of algebraic tensors. We will use 
such networks as a model of computation for Boolean functions. 
If a function $f:\{0,1\}^{\variableset}\rightarrow \{0,1\}$ can be computed by a 
quantum circuit of size $S$ and treewidth $t$, then $f$ can also be computed by an 
algebraic tensor network of size $S$ and treewidth $t$.  
Therefore, superlinear size lower-bounds for algebraic tensor networks of treewidth $t$
imply superlinear size lower bounds for quantum circuits of treewidth $t$. 

Let $\Pi = \{\,\ket{0}\bra{0},\,\ket{0}\bra{1},\,\ket{1}\bra{0},\,\ket{1}\bra{1}\,\}$ be the 
standard orthonormal basis for the space of $2\times 2$ complex matrices.
Let $\variableset$ be a finite set of variables. We denote by $\C[\variableset]$ the ring of complex polynomials 
in $\variableset$, and by $\R[\variableset]$ the ring of real polynomials in $\variableset$.

\begin{definition}[Algebraic Tensor]
\label{definition:AlgebraicTensor}
An {\em algebraic tensor} with index set $\indexset=\{i_1,...,i_k\}$ over a finite set of variables 
$\variableset$ is a $k$-dimensional array
$\atensor = [\atensor(\sigma_{i_1},...,\sigma_{i_k})]_{\sigma_{i_1},...,\sigma_{i_k}}$ where for 
each $\sigma_{i_1}...\sigma_{i_k}\in \Pi^k$, 
the entry $\atensor(\sigma_{i_1},...,\sigma_{i_k})$ is a polynomial in $\C[\variableset]$. 
\end{definition}

We note that $\atensor$ has $4^k$ entries. We write $\indexset(\atensor)$ to denote the index set of 
$\atensor$. The rank of $\atensor$ is defined as $\rank(\atensor) = |\indexset(\atensor)|$, i.e., as the 
size of the index set of $\atensor$. As a degenerate case, we regard a polynomial 
$p\in \C[\variableset]$ as an algebraic tensor of rank $0$. In other words, a polynomial is an 
algebraic tensor with empty index set. The {\em algebraic degree} of $\atensor$, denoted by $\degree(\atensor)$, is defined as 
the maximum degree of a polynomial occurring in $\atensor$. 

\begin{definition}[Algebraic Tensor Network]
\label{definition:AlgebraicTensorNetwork}
An {\em algebraic tensor network} over $\variableset$ is a sequence ${\tensornetwork = [g_1,g_2,...,g_m]}$ of algebraic tensors 
over $\variableset$ such that 
${|\{j\;|\; i\in \indexset(g_j)\}| = 2}$ for each ${i\in \bigcup_{j=1}^{m} \indexset(g_j)}$. 
\end{definition}

In other words, if a number $i$ occurs in the index set of some tensor in $\tensornetwork$, then $i$ occurs in the 
index set of precisely two such tensors.
The size of $\tensornetwork$, denoted by $|\tensornetwork|$, is defined as the number of tensors in $\tensornetwork$. 
The rank of $\tensornetwork$ is defined as $\rank(\tensornetwork) = \max_{i} \rank(\atensor_i)$. 
The {\em algebraic degree} of $\tensornetwork$ is defined as 
$\degree(\tensornetwork) = \max_{i}\degree(\atensor_i)$, and the {\em total degree} of $\tensornetwork$ is 
defined as $\totaldegree(\tensornetwork) = \sum_{i} \degree(g_i)$.

\begin{figure}[hf]
\centering
\includegraphics[scale=0.23]{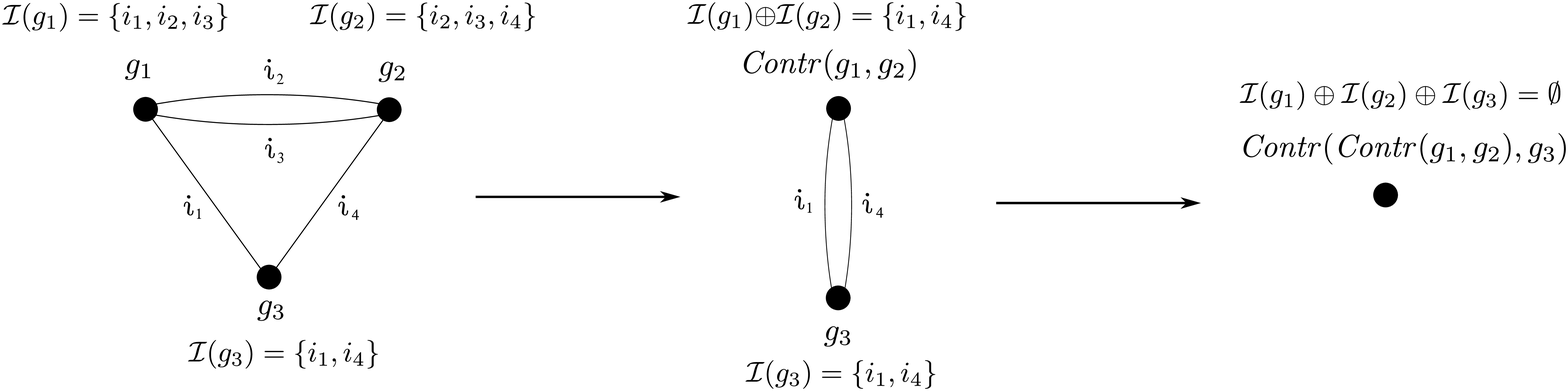}
\caption{Left: the graph $\graph(\tensornetwork)$ of an algebraic tensor network 
$\tensornetwork =[\,\atensor_1,\atensor_2,\atensor_3 \,]$. 
Middle: contracting the tensors $\atensor_1$ and $\atensor_2$ yields the algebraic tensor network 
$\tensornetwork =[\, \atensor_3,\tensorContraction(\atensor_1,\atensor_2)\,]$. 
Right: after all pairs have been contracted, the only remaining algebraic tensor 
$\tensorContraction(\tensorContraction(\atensor_1,\atensor_2),\atensor_3)$
is a complex polynomial, i.e., an algebraic tensor of rank $0$ (its index set is empty).} 
\label{figure:AlgebraicTensorNetworkContraction}
\end{figure}

An algebraic tensor network $\tensornetwork = [\atensor_1,\atensor_2,...,\atensor_m]$ can be represented by a
labeled undirected graph 
$\graph(\tensornetwork) = (V,E,\tensorlabeling,\indexlabeling)$ with vertex set $V = \{v_1,...,v_m\}$
and edge-set $E = \{e_i\;|\; i\in \bigcup_{j}\indexset(g_j)\}$. 
Each vertex $v_j\in V$ is labeled by $\tensorlabeling$ with the tensor $\tensorlabeling(v_j) = g_j$. 
Each edge $e_i$ is labeled by $\indexlabeling$ with the label $\indexlabeling(e_i) = i$. Finally, each 
edge $e_i$ has endpoints $v_j$ and $v_{j'}$ if and only if $i\in \indexset(g_j)\cap \indexset(g_{j'})$ 
(see Fig. \ref{figure:AlgebraicTensorNetworkContraction}). We note 
that $\graph(\tensornetwork)$ may have multiple edges, but no loops. 
We say that a tensor network $\tensornetwork$ is connected if the graph $\graph(\tensornetwork)$ is 
connected. In this work we will only be concerned with connected tensor networks. 
The treewidth of an algebraic tensor network $\tensornetwork$ is defined as the treewidth 
of its graph $\graph(\tensornetwork)$.

\subsection{Algebraic Tensor Network Contraction}
\label{subsection:AlgebraicTensorNetworkContraction}

Let $\indexset$ and $\indexset'$ be sets of positive integers, and let 
$\indexset\oplus \indexset' = (\indexset\cup \indexset')\backslash (\indexset\cap \indexset')$ be the 
symmetric difference between $\indexset$ and $\indexset'$. We say that a pair of algebraic 
tensors $\atensor$ and $\atensor'$ is {\em contractible} if ${\indexset(\atensor)\cap \indexset(\atensor') \neq \emptyset}$.
If $\atensor,\atensor'$ is a contractible pair of algebraic tensors such that 
${\indexset(\atensor)=\{i_1,...,i_k,l_1,...,l_r\}}$ and ${\indexset'=\{j_1,...,j_{k'},l_1,...,l_r\}}$,
then the contraction of $\atensor$ with $\atensor'$ is an algebraic tensor 
$\tensorContraction(\atensor,\atensor')$ with index set 
$\indexset(\atensor)\oplus \indexset(\atensor') = \{i_1,...,i_k, j_1,...,j_{k'}\}$ where for each 
$\sigma_{i_1},...,\sigma_{i_k},\sigma_{j_1},...,\sigma_{j_{k'}} \in \Pi^{k+k'}$, the entry 
$\tensorContraction(\atensor,\atensor')(\sigma_{i_1},...,\sigma_{i_k},\sigma_{j_1},...,\sigma_{j_{k'}})$ is 
defined as

\begin{equation}
\label{equation:TensorContraction}
\sum_{\sigma_{l_1},...,\sigma_{l_r}\in \Pi^r} \atensor(\sigma_{i_1},...,\sigma_{i_k},\sigma_{l_1},...,\sigma_{l_{r}})\cdot 
\atensor'(\sigma_{j_1},...,\sigma_{j_{k'}}, \sigma_{l_1},...,\sigma_{l_r}),
\end{equation}

The following observation follows straightforwardly from Equation \ref{equation:TensorContraction}.

\begin{observation}
\label{observation:AdditiveDegree}
Let $\atensor$ and $\atensor'$ be a pair of contractible tensors. Then 
$$\degree(\tensorContraction(\atensor,\atensor')) \leq \degree(\atensor)+\degree(\atensor').$$
\end{observation}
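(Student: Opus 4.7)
The plan is to unwind Equation~\ref{equation:TensorContraction} entry-by-entry and bound degrees using the two elementary facts that $\deg(p\cdot q)\leq \deg(p)+\deg(q)$ and $\deg(p+q)\leq \max\{\deg(p),\deg(q)\}$ in $\C[\variableset]$. Writing $d=\degree(\atensor)$ and $d'=\degree(\atensor')$, by definition every entry $\atensor(\sigma_{i_1},\ldots,\sigma_{i_k},\sigma_{l_1},\ldots,\sigma_{l_r})$ is a polynomial of degree at most $d$, and every entry $\atensor'(\sigma_{j_1},\ldots,\sigma_{j_{k'}},\sigma_{l_1},\ldots,\sigma_{l_r})$ is a polynomial of degree at most $d'$.

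Fix any choice of basis indices $\sigma_{i_1},\ldots,\sigma_{i_k},\sigma_{j_1},\ldots,\sigma_{j_{k'}}\in\Pi^{k+k'}$, and consider the corresponding entry of $\tensorContraction(\atensor,\atensor')$ as given by Equation~\ref{equation:TensorContraction}. Each summand is a product of an entry of $\atensor$ and an entry of $\atensor'$, hence a polynomial of degree at most $d+d'$. Since the outer sum ranges over the finite set $\Pi^{r}$, the whole entry is a finite sum of polynomials each of degree at most $d+d'$, and therefore itself has degree at most $d+d'$.

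Taking the maximum over all $4^{k+k'}$ choices of free indices yields $\degree(\tensorContraction(\atensor,\atensor'))\leq d+d'=\degree(\atensor)+\degree(\atensor')$, which is the claimed bound. There is no real obstacle here: the argument is purely mechanical from the definitions, and the only thing to be mildly careful about is that the sum over $\sigma_{l_1},\ldots,\sigma_{l_r}$ cannot increase the pointwise degree bound, since the set $\Pi^{r}$ is independent of the variables in $\variableset$ and merely indexes finitely many polynomial summands.
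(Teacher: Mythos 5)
Your argument is correct and is exactly the elementary unwinding that the paper has in mind: it states that the observation ``follows straightforwardly from Equation~\ref{equation:TensorContraction}'' without further detail, and your entry-by-entry degree bound (products add degrees, finite sums over $\Pi^{r}$ do not increase them) is precisely that straightforward argument. Nothing to add.
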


\begin{definition}
\label{definition:AlgebraicTensorNetworkContraction}
Let $\tensornetwork= [g_1,...,g_{m}]$ be an algebraic tensor network and let $g_j$ and $g_{l}$ be a pair of contractible tensors
in $\tensornetwork$. We say that a tensor network $\tensornetwork'$ is obtained by the contraction of 
$g_{j}$ and $g_{l}$ if $\tensornetwork' = (\tensornetwork \backslash \{g_j,g_{l}\}) \cup \{\tensorContraction(g_j,g_{l})\}$. 
\end{definition}

\newcommand{\finalpolynomial}{p}

The contraction of the tensors $g_{j}$ and $g_{l}$ in $\tensornetwork$ may be visualized as an operation 
that merges the vertices $v_j$ and $v_{l}$ in the graph $G(\tensornetwork)$ associated with 
$\tensornetwork$ (Fig. \ref{figure:AlgebraicTensorNetworkContraction}). The new vertex arising 
from the merging of $v_{j}$ and $v_{l}$ is now labeled with $\tensorContraction(g_j,g_{l})$. 
We note that if $\tensornetwork$ is connected, then the resulting tensor network 
$\tensornetwork'$ is also connected. Therefore, a tensor network $\tensornetwork$ with $m$ 
tensors can be contracted $m-1$ times until a unique tensor $\atensor$ 
is left (Fig. \ref{figure:AlgebraicTensorNetworkContraction}). The remaining tensor 
$g$ is an algebraic tensor of degree $0$ (i.e, $\atensor$ is  a complex polynomial).

Let $\tensornetwork$ be an (algebraic) tensor network of size $m$. We say that a sequence 
$\tensornetwork_0\tensornetwork_1...\tensornetwork_{m-1}$ 
is a contraction sequence for $\tensornetwork$ if $\tensornetwork_0 = \tensornetwork$ and for each $i\in \{1,...,m-1\}$, 
the tensor network $\tensornetwork_{i}$ is obtained from $\tensornetwork_{i-1}$ by the contraction of some 
pair of tensors. The next observation states that the algebraic tensor which arises from the contraction of 
all (algebraic) tensors in $\tensornetwork$ does not depend on the order of contraction. 

\begin{observation}
\label{observation:TensorContraction}
Let $\tensornetwork$ be an algebraic tensor network of size $m$. Let $\tensornetwork_1\tensornetwork_2...\tensornetwork_m$ 
and $\tensornetwork_1'\tensornetwork_2'...\tensornetwork_m'$ be contraction sequences for $\tensornetwork$.
Let $\tensornetwork_m = [g]$ and $\tensornetwork_m' = [\atensor']$. Then $\atensor = \atensor'$. 
\end{observation}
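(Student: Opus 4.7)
The plan is to exhibit a closed-form expression for the polynomial produced by any full contraction of $\tensornetwork$, which is manifestly independent of the contraction order. Let $\indexset = \bigcup_{j=1}^{m}\indexset(g_j)$ and recall that by Definition \ref{definition:AlgebraicTensorNetwork} each index $i\in \indexset$ appears in the index sets of exactly two tensors of $\tensornetwork$. I would define the candidate polynomial
\begin{equation}
\label{equation:ClosedForm}
\finalpolynomial \;=\; \sum_{\sigma:\indexset \to \Pi} \;\prod_{j=1}^{m} g_j\bigl(\sigma\restriction_{\indexset(g_j)}\bigr) \;\in\; \C[\variableset],
\end{equation}
and then show by induction on $m$ that every contraction sequence $\tensornetwork_0\tensornetwork_1\ldots \tensornetwork_{m-1}$ ending in a singleton $[g]$ satisfies $g = \finalpolynomial$. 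Since $\finalpolynomial$ does not depend on any contraction order, this immediately yields $g = g'$.

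For the induction, the base case $m=1$ is vacuous because $\indexset=\emptyset$ and $g_1$ is already a polynomial. For the step, suppose the first contraction merges a contractible pair $g_j, g_l$ with common indices $L = \indexset(g_j)\cap \indexset(g_l)$, producing the tensor $\tensorContraction(g_j,g_l)$ whose entries are given by Equation \ref{equation:TensorContraction} as sums over assignments to $L$. The resulting network $\tensornetwork_1$ has size $m-1$; I would first verify that $\tensornetwork_1$ is again a well-formed tensor network, i.e.\ every surviving index in $\bigcup_{k}\indexset(g_k)\setminus L$ still appears in exactly two tensors of $\tensornetwork_1$, which follows from the symmetric-difference definition of $\indexset(\tensorContraction(g_j,g_l))$. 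Applying the inductive hypothesis to $\tensornetwork_1$ yields a closed-form sum over assignments to the remaining indices $\indexset\setminus L$, and using distributivity and commutativity of sum and product in $\C[\variableset]$ to interchange the inner sum over $L$ with the outer sum and product, we recover precisely the expression \eqref{equation:ClosedForm}. Hence the result of every contraction sequence equals $\finalpolynomial$.

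The main obstacle is the bookkeeping in the inductive step: one must be careful that (i) the commutativity step truly only shuffles a sum over $L$ past a product factor $g_k$ for tensors $k\notin\{j,l\}$, which is legal because $L\cap \indexset(g_k) = \emptyset$ by the two-tensor-per-index condition, and that (ii) multi-edges (i.e.\ $|L|>1$) are handled correctly, since the symmetric difference $\indexset(g_j)\oplus \indexset(g_l)$ simultaneously removes all shared labels. Both points reduce to the combinatorial invariant that every index labels exactly two tensors, which is preserved along contraction and matches the edge structure of $\graph(\tensornetwork)$. Once these checks are in place, Observation \ref{observation:TensorContraction} follows immediately from the uniqueness of $\finalpolynomial$.
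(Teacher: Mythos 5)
Your proof is correct and is essentially the argument the paper appeals to: the closed-form sum over all assignments $\sigma:\indexset\to\Pi$ of the product of entries, together with your two checks that the two-tensors-per-index invariant is preserved under contraction and that the shared indices $L$ of the contracted pair satisfy $L\cap\indexset(g_k)=\emptyset$ for every untouched tensor $g_k$, is exactly the standard order-independence proof for tensor-network contraction, and since it uses only commutativity, associativity and distributivity it carries over verbatim from complex entries to entries in $\C[\variableset]$. The paper gives no explicit proof of this observation, deferring instead to the standard argument in \cite{MarkovShi2008,Orus2014practical}; your write-up supplies precisely that argument in full detail.
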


We note that the proof of Observation \ref{observation:TensorContraction} is identical to 
the proof that contracting all tensors of a tensor network, in any given order, yields the same 
outcome (see for instance \cite{MarkovShi2008,Orus2014practical}).

We let $\atensor_{\tensornetwork}$ be the rank-0 algebraic tensor obtained by the contraction of 
all algebraic tensors in $\tensornetwork$.
By Observation \ref{observation:TensorContraction}, this tensor is well defined.  
Let $\atensor_{\tensornetwork} = p_1 + i\cdot p_2$ where $p_1,p_2\in \R[X]$. The value 
of $\tensornetwork$ is defined as $\finalvalue_{\tensornetwork} = \sqrt{p_1^2 + p_2^2}$.

\begin{proposition}
\label{proposition:PolynomialFromAlgebraicTensorNetwork}
Let $X$ be a set of variables, and let $\tensornetwork = [g_1,...,g_m]$ be a connected algebraic tensor network 
over $\variableset$. Then $\finalvalue_{\tensornetwork}^2$ is a real polynomial in $\R[\variableset]$ of degree at most 
$2\cdot \totaldegree(\tensornetwork)$. 
\end{proposition}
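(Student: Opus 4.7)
The plan is to reduce the statement to two ingredients already on the table: the additive degree bound from Observation \ref{observation:AdditiveDegree}, and the well-definedness of the outcome of a contraction sequence from Observation \ref{observation:TensorContraction}. Since $\tensornetwork$ is connected, I can pick any contraction sequence $\tensornetwork_0 = \tensornetwork, \tensornetwork_1,\ldots,\tensornetwork_{m-1}$ that reduces $\tensornetwork$ to a single rank-$0$ tensor $[\atensor_{\tensornetwork}]$, which by Observation \ref{observation:TensorContraction} depends only on $\tensornetwork$, not on the chosen sequence. Since $\atensor_{\tensornetwork}$ has empty index set, it is just a single polynomial in $\C[\variableset]$, and can therefore be decomposed as $\atensor_{\tensornetwork} = p_1 + i\cdot p_2$ with $p_1,p_2\in\R[\variableset]$; this makes $\finalvalue_{\tensornetwork}^{2}=p_1^{2}+p_2^{2}$ a genuine element of $\R[\variableset]$, settling the first half of the claim.

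For the degree bound, I would prove by induction on $j$ along the chosen contraction sequence that
\[
\degree(\atensor)\;\le\;\sum_{\atensor'\in \tensornetwork_0\setminus\tensornetwork_j}\degree(\atensor')\;+\;\sum_{\atensor'\in\tensornetwork_j\cap\tensornetwork_0}\degree(\atensor'),
\]
which simplifies to saying that every tensor appearing at any intermediate step has degree at most the sum of the degrees of the original tensors that were merged to produce it. The inductive step is immediate from Observation \ref{observation:AdditiveDegree}: if $\atensor$ is obtained from contracting $\atensor_j$ and $\atensor_l$, then $\degree(\atensor)\le\degree(\atensor_j)+\degree(\atensor_l)$, and the two bounds provided by the induction hypothesis add up to the desired quantity. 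Applied to the final tensor $\atensor_{\tensornetwork}$, this gives $\degree(\atensor_{\tensornetwork})\le \sum_{i=1}^{m}\degree(g_i)=\totaldegree(\tensornetwork)$.

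Consequently $\deg(p_1),\deg(p_2)\le\totaldegree(\tensornetwork)$, so $\deg(p_1^{2}+p_2^{2})\le 2\cdot\totaldegree(\tensornetwork)$, finishing the proof.

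I do not expect a genuine obstacle here; the only slightly delicate point is that the bound on $\degree(\atensor_{\tensornetwork})$ is a bound on the degree of a polynomial in $\C[\variableset]$, and one has to observe that taking real and imaginary parts can only decrease the degree, so the bounds on $\deg(p_1)$ and $\deg(p_2)$ follow for free. The connectedness hypothesis is used exactly once, to guarantee that some contraction sequence of length $m-1$ actually exists and yields a rank-$0$ tensor.
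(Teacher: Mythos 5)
Your proof is correct and follows essentially the same route as the paper: write $\atensor_{\tensornetwork}=p_1+i\cdot p_2$, observe $\finalvalue_{\tensornetwork}^{2}=p_1^{2}+p_2^{2}\in\R[\variableset]$, and use Observation \ref{observation:AdditiveDegree} to bound $\degree(\atensor_{\tensornetwork})$ by $\totaldegree(\tensornetwork)$, hence the degree of the square by $2\cdot\totaldegree(\tensornetwork)$. You merely make explicit the induction along a contraction sequence (and the appeal to Observation \ref{observation:TensorContraction}) that the paper leaves implicit.
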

\begin{proof}
Let $\atensor_{\tensornetwork} = p_1+i\cdot p_2$ where $p_1$ and $p_2$ are polynomials in $\R[\variableset]$. 
Then $\finalvalue_{\tensornetwork}^2 = p_1^2 + p_2^2$ is clearly a polynomial in $\R[\variableset]$.
By Observation \ref{observation:AdditiveDegree}, for any pair of contractible 
algebraic tensors $\atensor$ and $\atensor'$, it holds that 
$\degree(\tensorContraction(\atensor,\atensor')) \leq \degree(\atensor) + \degree(\atensor')$. Therefore, 
$\degree(\atensor_{\tensornetwork}) \leq \sum_{j=1}^{m} \atensor_j = \totaldegree(\tensornetwork)$.
This implies that the degree of $\finalvalue_{\tensornetwork}^{2}$ is at most $2\cdot \totaldegree(\tensornetwork)$. 
%$\square$ 
\end{proof}

 Note that if $\tensornetwork$ is an algebraic tensor network 
over $\variableset$ and $\alpha \in \{0,1\}^{\variableset}$ is a 
Boolean assignment of $\variableset$, then $\finalvalue_{\tensornetwork}(\alpha)$
is a positive real number.

\begin{definition}[Function Computed by an Algebraic Tensor Network]
We say that an algebraic tensor network $\tensornetwork$ over a set of variables $\variableset$ 
computes a function $f:\{0,1\}^{\variableset}\rightarrow \{0,1\}$ if the following conditions are 
verified for each assignment $\alpha\in \{0,1\}^{\variableset}$. 
\begin{enumerate}
	\item If $f(\alpha) = 1$ then $\finalvalue_{\tensornetwork}(\alpha) > 1/2$. 
	\item If $f(\alpha) = 0$ then $\finalvalue_{\tensornetwork}(\alpha) < 1/2$. 
\end{enumerate}
\end{definition}

Any function $f:\{0,1\}^{X}\rightarrow \{0,1\}$ that can be computed by a 
quantum circuit $C$ of treewidth $t$ can also be computed by an algebraic tensor 
network $\tensornetwork_{C}$ of treewidth $t$ and algebraic-degree $1$. This statement 
is formalized in the following proposition.  

\begin{proposition}
\label{proposition:ConversionQuantumCircuitsAlgebraicTensorNetworks}
Let $C$ be a quantum circuit over a set of variables $X$ of treewidth $t$ such that 
all gates in $C$ act on at most $r$ qubits. Then there is an algebraic tensor 
network $\tensornetwork_{C}$ over $X$ of treewidth $t$, algebraic degree $1$, and rank at most $2r$,
such that $\finalvalue_{\tensornetwork_{C}}(\alpha) = \mathit{Pr}(C(\alpha))$ for every assignment 
$\alpha:X\rightarrow \{0,1\}$. 
\end{proposition}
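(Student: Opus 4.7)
The plan is to realize $\tensornetwork_{C}$ as the \emph{doubled} (superoperator) tensor network that computes $\trace(U_{C}\,\rho\,U_{C}^{\dagger}\,M_{C})$ in the $\Pi^{|\inputvertices|}$ basis, with the only new ingredient being that each input vertex labeled by a variable $x$ is turned into a rank-$1$ algebraic tensor whose two nonzero entries are the degree-$1$ polynomials $1-x$ and $x$. Because this construction places exactly one algebraic tensor at each vertex of $C$ and creates one shared index per edge of $C$, the graph $\graph(\tensornetwork_{C})$ will be isomorphic to $\graph(C)$ (with multiplicities preserved), and the treewidth bound will follow for free.

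Concretely, for each vertex $v$ of $C$ I would introduce an algebraic tensor $\atensor_{v}$ whose indices are in bijection with the edges of $C$ incident to $v$, using $\edgelabelingfunction$ to fix the labels. If $v$ is an internal vertex labeled by a $k$-qubit unitary $U$ with $k\leq r$, then $\rank(\atensor_{v})=2k\leq 2r$ and the entries are the constants $\atensor_{v}(\sigma_{1},\ldots,\sigma_{k},\sigma_{1}',\ldots,\sigma_{k}') = \trace\bigl((\sigma_{1}'\otimes\cdots\otimes\sigma_{k}')^{\dagger}\,U\,(\sigma_{1}\otimes\cdots\otimes\sigma_{k})\,U^{\dagger}\bigr)$. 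If $v$ is an input vertex labeled $\ket{b}$ with $b\in\{0,1\}$, I take the rank-$1$ tensor equal to $1$ at $\sigma=\ket{b}\bra{b}$ and $0$ at the three other elements of $\Pi$. If $v$ is an input vertex labeled by a variable $x\in \variableset$, I take the rank-$1$ algebraic tensor with entries $1-x$ at $\ket{0}\bra{0}$, $x$ at $\ket{1}\bra{1}$, and $0$ at the two off-diagonal elements of $\Pi$. Finally, for an output vertex labeled by a $1$-qubit measurement $M$, I take the rank-$1$ tensor with entry $\trace(\sigma\,M)$ at each $\sigma\in\Pi$. In every case the rank is at most $2r$ and the algebraic degree is at most $1$.

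It then remains to verify three things. First, $\graph(\tensornetwork_{C})\cong\graph(C)$, which is immediate from the construction and yields $\treewidth(\tensornetwork_{C})\leq t$. Second, the rank and algebraic-degree bounds, which follow from the case analysis above. Third, and substantively, that contracting the whole network yields a rank-$0$ algebraic tensor $\atensor_{\tensornetwork_{C}}$ with $\atensor_{\tensornetwork_{C}}(\alpha)=\mathit{Pr}(C(\alpha))$ for every Boolean assignment $\alpha$. I would prove this by induction on the number of contractions, invoking Observation \ref{observation:TensorContraction} to fix a convenient order: after substituting $\alpha$, the input tensors jointly encode the density operator $\rho_{\alpha}=\bigotimes_{u\in\inputvertices}\ket{\alpha(u)}\bra{\alpha(u)}$ in the $\Pi^{|\inputvertices|}$ basis, each gate tensor implements the superoperator $\rho\mapsto U\rho U^{\dagger}$ in this basis, and contracting the output tensors with the resulting density operator realizes the inner product with $M_{C}=\bigotimes_{u\in\outputvertices}\vertexlabelingfunction(u)$, producing exactly $\trace(U_{C}\rho_{\alpha}U_{C}^{\dagger}M_{C})=\mathit{Pr}(C(\alpha))$. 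Because this output probability is a nonnegative real number, the imaginary part of $\atensor_{\tensornetwork_{C}}(\alpha)$ vanishes, and hence $\finalvalue_{\tensornetwork_{C}}(\alpha)=\mathit{Pr}(C(\alpha))$.

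The hardest step will be the index bookkeeping inside the gate tensors: one must carefully match the $k$ ``input'' indices and the $k$ ``output'' indices of the tensor with the in-edges and out-edges of the corresponding vertex of $C$, so that the alternation of $U$ and $U^{\dagger}$ in the superoperator calculus lines up with the edge orientations inherited from $C$ when several gates are composed. Nothing here is conceptually new beyond the standard Markov--Shi tensor-network simulation; the genuinely new observation is simply that promoting the four constant entries of an input tensor to the degree-$1$ polynomials $(1-x,0,0,x)$ is enough to recover the Boolean evaluation at every $\alpha$ by substitution, while keeping the algebraic degree at $1$ and leaving the underlying graph (hence the treewidth) completely unchanged.
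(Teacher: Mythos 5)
Your proposal follows essentially the same route as the paper's appendix: a Markov--Shi-style vertex-by-vertex conversion over the matrix basis $\Pi$, one algebraic tensor per input/gate/output vertex with indices given by the incident edges, so that $\graph(\tensornetwork_{C})$ is isomorphic to $\graph(C)$ (hence treewidth $t$), rank at most $2r$, and the only genuinely new ingredient being the degree-$1$ input tensor with entries $(1-x,0,0,x)$, which coincides exactly with the paper's $\trace(\ket{x}\bra{x}\,\sigma_i)$. The one place you deviate is the gate tensor, and the deviation is in your favor: the appendix assigns the entries $\trace\bigl([\sigma_{j_1}^{\dagger}\otimes\cdots\otimes\sigma_{j_k}^{\dagger}]\,U\,[\sigma_{i_1}\otimes\cdots\otimes\sigma_{i_k}]\bigr)$, i.e.\ the matrix of \emph{left multiplication} by $U$ in the basis $\Pi$, whereas you use the doubled (conjugation) superoperator entries $\trace\bigl([\cdots]^{\dagger}\,U\,[\cdots]\,U^{\dagger}\bigr)$, and you then actually sketch the verification --- fixing a convenient contraction order via Observation \ref{observation:TensorContraction} --- that the contracted rank-$0$ tensor equals $\trace(U_C\,\rho_{\alpha}\,U_C^{\dagger}\,M_C)=\mathit{Pr}(C(\alpha))$, a verification the appendix omits entirely. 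Your convention is the one under which the claimed identity checks out directly (the contraction propagates $\rho\mapsto U\rho U^{\dagger}$ and pairs with $M_C$ at the outputs, giving a nonnegative real so that taking $\finalvalue$ is harmless); with the left-multiplication entries taken literally, the contraction produces a quantity of the form $\sum_{c,d}(U\rho)_{cd}M_{cd}$ whose modulus need not equal the output probability (already a single Hadamard measured against $\ket{0}\bra{0}$ gives $1/\sqrt{2}$ instead of $1/2$), so either the paper intends the doubled tensors implicitly or its formula has a slip. The only thing left for you to do is the dagger bookkeeping you yourself flag: keep out-indices daggered and in-indices plain inside each gate tensor so that consecutive gate tensors compose to the superoperator of the composite unitary; with that convention fixed, your induction over contractions goes through and the proposal is complete.
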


The Proof of Proposition \ref{proposition:ConversionQuantumCircuitsAlgebraicTensorNetworks} is
analogous to the conversion from variable-less quantum circuits to tensor networks provided in 
in \cite{MarkovShi2008}. For completeness, we include the construction of the algebraic tensor network 
$\tensornetwork_{C}$ in Appendix \ref{appendix:ConversionQuantumCircuitsAlgebraicTensorNetworks}.

\subsection{Reducing the Size of Algebraic Tensor Networks}
\label{section:TensorNetworkReduction}

Let $\variableset$ be a set of variables and $Y\subseteq X$. 
We say that a polynomial $p\in \C[\variableset]$ {\em constrains} a 
variable $y\in Y$ if $y$ occurs in some non-zero term of $p$. 
We say that an algebraic tensor $\atensor$ over $X$ is a $Y$-tensor if 
some polynomial in $\atensor$ constrains some variable in $y\in Y$.
In this section we define the notion of carving width of a graph. It can 
be shown that the carving width of a graph is at most a constant times 
its treewidth. Subsequently, we show that if $\tensornetwork$ is an algebraic tensor network 
computing a Boolean function $f:\{0,1\}^{Y}\rightarrow \{0,1\}$, then 
this function can also be computed by an algebraic tensor network $\tensornetwork'$
of size at most $4l(w+1)$ and rank at most $2w$,  where $l$ is 
the number of $Y$ tensors in $\tensornetwork$ and $w$ is the {\em carving width}
of the graph $\graph(\tensornetwork)$.

Let $T$ be a tree. We denote by $\nodes(T)$ the set of nodes of $T$, by $\arcs(T)$ the 
set of arcs of $T$. We say that a node $u\in \nodes(T)$ is a {\em leaf} if $u$ has no
children. If $u$ is not a leaf, then $u$ is said to be an {\em internal node} of $T$. 
We denote by $\leaves(T)$ the set of leaves of $T$. 
For each node $u\in \nodes(T)$, we let $T[u]$ denote the subtree of $T$ rooted at $u$.

\begin{definition}[Rooted Carving Decomposition]
\label{definition:CarvingDecomposition}
Let $G=(V,E)$ be an undirected graph, possibly containing multiple edges. 
A {\em rooted carving decomposition} of $G$ is a pair $(T,\gamma)$ where 
$T$ is a rooted {\em binary} tree and $\gamma:\leaves(T)\rightarrow V$ is a bijection mapping each leaf 
$u\in \leaves(\carvingdecomposition)$ to a single vertex $\gamma(u)\in V$. 
\end{definition}

Observe that the internal nodes of the tree $T$ are unlabeled. 
Given a node $u\in \nodes(T)$, we let $V(u)=\gamma(\leaves(T[u])) = \{\gamma(v)\;|\;v\in \leaves(T[u])\}$
be the image of the leaves of $T[u]$ under $\gamma$. 
For a subset $V' \subseteq V$ we let $E(V')$ denote the set of edges in $G$ with one endpoint 
in $V'$ and another endpoint in $V\backslash V'$. The width of $\carvingdecomposition$, denoted by 
$\carvingwidth(\carvingdecomposition)$, is defined as $\max\{|E(V(u))| : u\in \nodes(T)\}$. 
The carving width of a graph $G$, denoted by $\carvingwidth(G)$, is defined as the 
minimum width of a carving decomposition of $G$. The following lemma establishes a 
relation between carving width and treewidth of a graph.

\begin{lemma}[\cite{RobertsonSeymour1995}]
\label{lemma:CarvingVsTreewidth}
Let $G$ be an undirected graph of treewidth $t$ and maximum degree $\Delta$. There exists a rooted carving decomposition 
$(T,\gamma)$ of $G$ of width $O(\Delta\cdot t)$. 
\end{lemma}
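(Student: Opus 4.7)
The plan is to derive the carving decomposition directly from a tree decomposition of width $t$, by ``attaching'' each vertex of $G$ to a leaf and then bounding cuts using the separator property of the tree decomposition. Concretely, I would begin with a tree decomposition $(T_0,\beta_0)$ of $G$ of width $t$, which by standard transformations may be assumed to be a binary rooted tree with $O(|V|)$ nodes such that $|\beta_0(u)|\leq t+1$ for every node $u$. None of these transformations increases the width, so every separator $\beta_0(p)\cap \beta_0(c)$ along a tree-edge $\{p,c\}$ has size at most $t+1$.

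Next I would augment $T_0$ into a tree $T$ whose leaves are in bijection with $V(G)$: for each vertex $v\in V(G)$, pick any node $u_v$ of $T_0$ with $v\in \beta_0(u_v)$ and attach a fresh leaf $\ell_v$ as a child of $u_v$, defining $\gamma(\ell_v)=v$. After a final pass to make $T$ binary (e.g.\ by replacing a node with $k$ children by a balanced binary gadget, adding only empty bags), one obtains a rooted binary tree with $|V(G)|$ leaves and an associated tree decomposition whose bags still satisfy the $\leq t+1$ bound. To bound the carving width, fix an internal node $u$ of $T$ and let $V(u)=\gamma(\leaves(T[u]))$. Any edge $e=\{v,v'\}\in E(V(u))$ has $v\in V(u)$ and $v'\notin V(u)$. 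By the tree-decomposition axioms, $e$ lies in some bag of $T_0$, and since the sets of nodes whose bags contain $v$ (resp.\ $v'$) are connected subtrees, both $v$ and $v'$ must belong to the separator
\[
S_u \;=\; \beta_0(u)\cap \beta_0(\mathrm{parent}(u))
\]
induced by the tree-edge leaving $T[u]$. Because $|S_u|\leq t+1$ and each of these vertices has degree at most $\Delta$ in $G$, the total number of cut edges is at most $(t+1)\Delta = O(\Delta\cdot t)$, giving the required carving-width bound.

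The main obstacle I anticipate is bookkeeping around the two transformations, namely attaching the per-vertex leaves and then rebalancing to a binary tree, while maintaining that every cut $E(V(u))$ in $T$ corresponds to a separator of size at most $t+1$ in $T_0$. The critical observation is that whenever a subdivision or binarization introduces a new auxiliary internal node, one can assign it a bag equal to an existing adjacent bag, so separators along new tree-edges are subsets of existing separators and the $t+1$ bound is preserved. Modulo this careful (but routine) bookkeeping, the combinatorial core of the proof is exactly the separator-times-degree estimate in the previous paragraph.
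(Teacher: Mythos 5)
The paper does not prove this lemma at all: it is imported as a black box from Robertson and Seymour, so your self-contained derivation from a tree decomposition is a genuinely different (and more informative) route, and its core is sound. The skeleton -- binarize a width-$t$ tree decomposition, hang a leaf $\ell_v$ off a bag containing $v$, and bound each cut $E(V(u))$ by the separator of the tree-edge above $u$ times the maximum degree -- is the standard way to prove $\carvingwidth(G)\leq (t+1)\Delta$, and the bookkeeping you defer (copying bags onto binarization gadgets, and pruning subtrees that received no leaf $\ell_v$ so that the leaves of $T$ are exactly in bijection with $V(G)$) is indeed routine.

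One step is overstated as written: it is not true that \emph{both} endpoints of a cut edge $\{v,v'\}$ must lie in $S_u=\beta(u)\cap\beta(\mathrm{parent}(u))$. The correct statement, which is what the connectivity argument actually gives, is that \emph{at least one} endpoint lies in $S_u$: the edge $\{v,v'\}$ is covered by some bag at a node $w$; if $w\in T[u]$ then the subtree of nodes containing $v'$ joins $w$ (inside $T[u]$) to $\ell_{v'}$ (outside), hence crosses the tree-edge above $u$ and $v'\in S_u$, while if $w\notin T[u]$ the symmetric argument puts $v\in S_u$; the endpoint whose occurrences stay entirely on one side need not appear in $S_u$. Fortunately your final count only uses the weaker fact: every cut edge is incident to a vertex of $S_u$, $|S_u|\leq t+1$, and each such vertex has degree at most $\Delta$, so $|E(V(u))|\leq (t+1)\Delta = O(\Delta\cdot t)$. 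With the claim weakened to ``at least one endpoint'' (and the leaf-pruning made explicit), your proof is correct and yields the stated bound.
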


Let $\tensornetwork$ be a tensor network and $\graph(\tensornetwork)=(V,E,\tensorlabeling,\indexlabeling)$ 
be the graph associated with 
$\tensornetwork$. Let $(T,\gamma)$ be a carving decomposition of $\graph(\tensornetwork)$ of width $w$. 
For each node $u\in \nodes(T)$, we define the following set.
$$\leaves(T[u],\Yvariableset) =  \{u'\in \leaves(T[u])\;|\;\tensorlabeling(\gamma(u'))\mbox{ is a $\Yvariableset$-tensor}\}.$$ 

In words, $\leaves(T[u],\Yvariableset)$ is the set of leaves $u'$ of $T$ whose corresponding vertex $\gamma(u')$ 
in $\graph(\tensornetwork)$ is labeled by $\tensorlabeling$ with a $\Yvariableset$-tensor. 

\begin{definition}[$\Yvariableset$-node]
\label{definition:Ynode}
We say that a node $u\in \nodes(T)$ 
is a $\Yvariableset$-node if $u$ is either a leaf such that 
$\tensorlabeling(\gamma(u))$ is a $\Yvariableset$-tensor, or if $u$ is an internal node $u\in \nodes(T)$ such that 
$\leaves(T[u.l],Y)\neq \emptyset \mbox{ and } \leaves(T[u.r],Y)\neq \emptyset$.
\end{definition}

We let $\nodes(T,Y)$ denote the set of all $Y$-nodes of $T$. For instance, 
in Fig. \ref{figure:EliminatingTensors} we depict a carving decomposition of some 
algebraic tensor network. In this decomposition, the $\Yvariableset$-nodes are indicated in red. 
If $u$ is a $\Yvariableset$-node, then we say that a node $u'\neq u$ is the $\Yvariableset$-parent of $u$ if 
$u'$ is the ancestor of $u$ at minimal distance from $u$ with the property that $u'$ is itself a 
$\Yvariableset$-node. Alternatively, we may say that $u$ is a $Y$-child of $u'$. 
The following lemma states that the number of $Y$-nodes in a carving decomposition is 
proportional to the number of $Y$-leaves in it.

\begin{figure}[hf]
\centering
\includegraphics[scale=0.23]{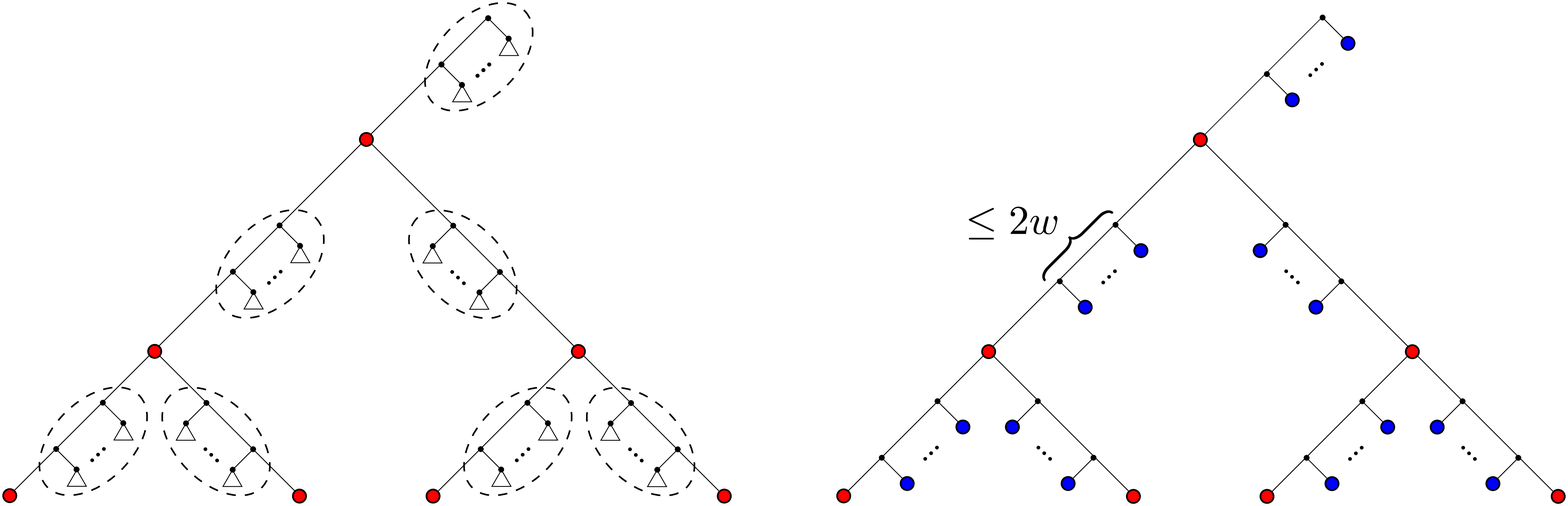}
\caption{
Left: A carving decomposition of the graph $\graph(\tensornetwork)$ associated 
with some tensor network $\tensornetwork$. The red nodes are $\Yvariableset$-nodes in $\nodes(T,Y)$.
The nodes inside each dashed region form a connected component  
$T_i$ of the forest $T\backslash \nodes(T,Y)$. Right: For each $T_i$, let $G[T_i]$ be 
the subgraph of $G(\tensornetwork)$ induced by the vertices $\gamma(\leaves(T_i))$. Then 
$G[T_i]$ has at most $2w$ connected components $C_{i,j}$. The contraction of all tensors 
labeling vertices of a component $C_{i,j}$ gives rise to a tensor $g_{i,j}$ of rank at most $2w$. 
Each such tensor corresponds to a blue node in the carving decomposition to the right. 
}
\label{figure:EliminatingTensors}
\end{figure}

\begin{lemma}
\label{lemma:TwoTimesMinusOneYNodes}
 $|\nodes(T,Y)|=2\cdot |\leaves(T,Y)|-1$.
\end{lemma}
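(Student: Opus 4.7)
The plan is to prove the identity by structural induction on the rooted binary tree $T$, keyed on the recursive definition of $\Yvariableset$-node. Intuitively, the $\Yvariableset$-nodes form the ``Steiner tree'' of the $\Yvariableset$-leaves inside $T$, and in a binary tree, any Steiner tree on $k$ leaves has exactly $k-1$ internal branching points, giving $k+(k-1)=2k-1$ nodes overall. I will make this intuition rigorous by induction.

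First, I will set up the induction. Assume without loss of generality that $|\leaves(T,\Yvariableset)|\geq 1$ (otherwise the statement of the lemma is vacuous in the context where it is applied). For the base case, if $T$ consists of a single leaf $u$ and $u$ is a $\Yvariableset$-leaf, then $\nodes(T,\Yvariableset)=\{u\}$ and $\leaves(T,\Yvariableset)=\{u\}$, so $1=2\cdot 1 - 1$, as required.

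For the inductive step, let $r$ be the root of $T$, with children $r_l$ and $r_r$, and let $T_l=T[r_l]$ and $T_r=T[r_r]$. I will split into three cases depending on how many $\Yvariableset$-leaves each subtree contains. If $\leaves(T_l,\Yvariableset)\neq\emptyset$ and $\leaves(T_r,\Yvariableset)=\emptyset$, then no node in $T_r$ is a $\Yvariableset$-node, and by Definition~\ref{definition:Ynode} the root $r$ is also not a $\Yvariableset$-node (since the right subtree contains no $\Yvariableset$-leaves). Hence $|\nodes(T,\Yvariableset)| = |\nodes(T_l,\Yvariableset)|$ and $|\leaves(T,\Yvariableset)|=|\leaves(T_l,\Yvariableset)|$, so the claim follows by the inductive hypothesis applied to $T_l$. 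The symmetric case is handled analogously. The interesting case is when both $\leaves(T_l,\Yvariableset)$ and $\leaves(T_r,\Yvariableset)$ are non-empty: then $r$ itself is a $\Yvariableset$-node, and the sets of $\Yvariableset$-nodes in the two subtrees are disjoint, so
\[
|\nodes(T,\Yvariableset)| = |\nodes(T_l,\Yvariableset)| + |\nodes(T_r,\Yvariableset)| + 1.
\]
Applying the inductive hypothesis to both $T_l$ and $T_r$ (each of which has at least one $\Yvariableset$-leaf) yields
\[
|\nodes(T,\Yvariableset)| = \bigl(2|\leaves(T_l,\Yvariableset)|-1\bigr)+\bigl(2|\leaves(T_r,\Yvariableset)|-1\bigr)+1 = 2|\leaves(T,\Yvariableset)|-1,
\]
since $\leaves(T,\Yvariableset)=\leaves(T_l,\Yvariableset)\cup\leaves(T_r,\Yvariableset)$ as a disjoint union.

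I do not anticipate a serious obstacle here; the only subtlety is to verify carefully that in the one-subtree case the root $r$ is not counted as a $\Yvariableset$-node (which follows directly from the conjunction in Definition~\ref{definition:Ynode}) and that the recursive decomposition of $\nodes(T,\Yvariableset)$ into $\nodes(T_l,\Yvariableset)\cup\nodes(T_r,\Yvariableset)\cup\{r\text{ if applicable}\}$ is indeed a disjoint union, which is immediate because $T_l$ and $T_r$ are vertex-disjoint subtrees of $T$ neither of which contains $r$.
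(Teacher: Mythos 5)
Your proof is correct, but it takes a different route from the paper's. You argue by structural induction on $T$ itself, splitting at the root according to which of the two subtrees contain $\Yvariableset$-leaves; the key observation that makes this work is that membership in $\nodes(T,\Yvariableset)$ is a local property of the subtree below a node (Definition~\ref{definition:Ynode} only inspects $T[u]$), so the count decomposes cleanly as $\nodes(T_l,\Yvariableset)\cup\nodes(T_r,\Yvariableset)$ plus the root exactly when both subtrees are populated, and the arithmetic $(2a-1)+(2b-1)+1=2(a+b)-1$ closes the induction. The paper instead proves that every internal $\Yvariableset$-node has exactly two $\Yvariableset$-children (via lowest-common-ancestor contradiction arguments on the $\Yvariableset$-parent relation), assembles the $\Yvariableset$-nodes into an auxiliary contracted tree $T[\Yvariableset]$ whose leaves are precisely the $\Yvariableset$-leaves, and then invokes the standard fact that a full binary tree with $L$ leaves has $L-1$ internal nodes. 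Your induction is more elementary and self-contained: it avoids both the LCA arguments and the appeal to the leaf-counting fact (which itself is normally proved by exactly the induction you carry out), and you also make explicit the caveat, glossed over in the paper, that the identity requires $|\leaves(T,\Yvariableset)|\geq 1$ (with no $\Yvariableset$-leaves the left side is $0$, not $-1$). What the paper's construction buys in exchange is the explicit picture of the $\Yvariableset$-nodes forming a binary ``skeleton'' of $T$, which matches the figure accompanying this section and the same style of LCA reasoning reused in the subsequent lemma on the components of $T\backslash\nodes(T,\Yvariableset)$.
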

\begin{proof}
Let $u$ be an internal $\Yvariableset$-node of $T$. We show that 
$u$ has precisely two $\Yvariableset$-children. Suppose for contradiction 
that $u$ has at most one $\Yvariableset$-child. Then by definition $u$ is not a $\Yvariableset$-node, since 
in this case either $\leaves(T[u.l],\Yvariableset)=\emptyset$ or $\leaves(T[u.r],\Yvariableset)=\emptyset$. Now suppose that 
$u$ has at least $3$ $\Yvariableset$-children. Since $T$ is a binary tree, two $\Yvariableset$-children of $u$ are either 
descendants of $u.l$ or descendants of $u.r$. Lets assume that $z$ and $z'$ are two distinct $\Yvariableset$-children of $u$
which are descendants of $u.l$. We observe that neither $z$ is a descendant of $z'$ nor $z'$ is a descendant of $z$, since 
otherwise, only one of these two vertices could have been a $\Yvariableset$-child of $u$. Now let $u'$ be the closest
ancestor of $z$ which is also an ancestor of $z'$. Then $u'$ is by definition a $\Yvariableset$-node. 
Since $u'$ is a descendant of $u.l$, this contradicts the assumption that $u$ is the $\Yvariableset$-parent of $z$ and $z'$.

Now let $T[\Yvariableset]$ be the tree whose nodes are $\Yvariableset$-nodes of $T$ and such that $(u,u')$ is an 
arc of $T[\Yvariableset]$ if and only 
if $u$ is the $\Yvariableset$-parent of $u'$. Then by the discussion above we have that $T[\Yvariableset]$ is a 
binary tree. Since any binary tree with $|\leaves(T,\Yvariableset)|$ leaves has $|\leaves(T,\Yvariableset)|-1$ 
internal nodes, the total number of $\Yvariableset$-nodes in $T$ is $2|\Yvariableset|-1$ (see Fig. \ref{figure:EliminatingTensors}). 
%$\square$
\end{proof}

Now let $T' = T \backslash \nodes(T,Y)$ be the forest which is obtained by deleting 
from $T$ all of its $\Yvariableset$-nodes. 

\begin{lemma}
\label{lemma:NumberComponentsTrees}
The number of connected components in the forest $T'= T\backslash \nodes(T,Y)$ 
is at most $|\nodes(T,Y)|=2|\leaves(T,Y)|-1$.
\end{lemma}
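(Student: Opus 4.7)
The plan is to associate each connected component of $T' = T\setminus \nodes(T,Y)$ with a $Y$-node in a way that each $Y$-node receives only a bounded number of components, and then combine this charging with the count of $Y$-nodes established in Lemma~\ref{lemma:TwoTimesMinusOneYNodes}.

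First, I would observe that, since $T$ is a tree, every component $C$ of $T'$ is itself a subtree of $T$ and hence has a unique topmost node $u_C$ (the one closest to the root of $T$). If $u_C$ is not the root of $T$, then its parent in $T$ cannot belong to $T'$ by maximality of $C$, so this parent lies in $\nodes(T,Y)$. I would charge $C$ to $\mathrm{par}_T(u_C)$; the at-most-one component containing the root of $T$ (which exists precisely when the root is not a $Y$-node) is left uncharged.

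Second, I would bound how many components can be charged to a given $Y$-node $v$. Because $T$ is binary, $v$ has at most two children in $T$, and a component charged to $v$ must have its topmost node equal to a child of $v$ that is itself not a $Y$-node. Hence a $Y$-leaf receives no charge at all, while an internal $Y$-node receives at most two. By Lemma~\ref{lemma:TwoTimesMinusOneYNodes}, the number of internal $Y$-nodes is $|\nodes(T,Y)| - |\leaves(T,Y)| = |\leaves(T,Y)|-1$. Summing, the total number of components is at most $1 + 2(|\leaves(T,Y)|-1) = 2|\leaves(T,Y)|-1 = |\nodes(T,Y)|$.

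The main obstacle is verifying the two structural facts that power the charging: (i) that for every non-root component the parent of its topmost node really belongs to $\nodes(T,Y)$, which is where maximality of $C$ in $T\setminus\nodes(T,Y)$ enters; and (ii) that distinct components charged to the same $Y$-node $v$ must enter $v$ through distinct non-$Y$ children, which uses the binary structure of $T$. Once these are in place the arithmetic bound is immediate, and the alternative case in which the root itself is a $Y$-node only improves the estimate.
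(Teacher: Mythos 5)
Your proof is correct, and it takes a somewhat different route than the paper. The paper charges each component $T_i$ \emph{downward}: it maps $T_i$ to the unique $Y$-node that is the closest descendant (in $T$) of the component's root $r_i$, proves this node is well defined via a least-common-ancestor argument, and then proves the map is injective, which immediately gives the bound $k\leq |\nodes(T,Y)|$. You instead charge each component \emph{upward}, to the parent of its topmost node, which by maximality must be a $Y$-node; you then bound the multiplicity of the charging by $2$ per internal $Y$-node (using that $T$ is binary and that distinct components have distinct topmost nodes), observe that $Y$-leaves receive no charge, and account separately for the at most one uncharged component containing the root of $T$. Combining with Lemma~\ref{lemma:TwoTimesMinusOneYNodes}, which gives $|\leaves(T,Y)|-1$ internal $Y$-nodes, you get $1+2(|\leaves(T,Y)|-1)=2|\leaves(T,Y)|-1=|\nodes(T,Y)|$, exactly the claimed bound. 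Your argument leans more heavily on the leaf/internal split from Lemma~\ref{lemma:TwoTimesMinusOneYNodes} but avoids the uniqueness-of-closest-descendant and injectivity arguments, so it is arguably more elementary; the paper's injection is slightly more self-contained in that it only uses the total count $|\nodes(T,Y)|$. Both arguments silently assume $\leaves(T,Y)\neq\emptyset$ (otherwise the stated bound is vacuously false with one component), which is harmless in the context where the lemma is applied.
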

\begin{proof}
Let $T_1,...,T_k$ be the connected components of the forest $T' = T\backslash \nodes(T,Y)$. 
For each $i\in \{1,...,k\}$, let $r_i$ be the root of $T_i$, and let $u_i$ be the closest descendant of $r_i$ in 
$T$ which is a $Y$-node. 
We claim that $u_i$ is uniquely determined by $r_i$. To see this, assume for the sake 
of contradiction that there are two descendants $u_i$ and $u_i'$ of $r_i$ in $T$ with the property that 
$u_i$ and $u_i'$ are $Y$-nodes at a minimal distance from $r_i$. Let $u_i''$
be the closest ancestor of $u_i$ which is also an ancestor of $u_i'$. Then $u_i''$ is by definition a $Y$-node. 
Since $u_i''$ is a $Y$-node closer from $r_i$ than $u_i$ and $u_i'$, we have reached a contradiction. 

Now consider the map $\mu:\{T_1,...,T_k\}\rightarrow \nodes(T,Y)$ that sends $T_i$ to $\mu(T_i) = u_i$. 
We claim that the map $\mu$ is an injection, implying in this 
way that $|\{T_1,...,T_k\}|\leq |\nodes(T,Y)|$. Assume for the 
sake of contradiction that for some $i,j$ with $i\neq j$, $\mu(T_i) = \mu(T_j)=u$. Then $u$ is a descendant 
of $r_i$ and a descendant of $r_j$ in $T$. This implies that either $r_i$ is a descendant 
of $r_j$ in $T$, or $r_j$ is a descendant of $r_i$ in $T$. 
Assume that $r_j$ is a descendant of $r_i$ in $T$. Since by assumption $r_i$ and $r_j$ belong to 
distinct connected components in $T\backslash \nodes(T,Y)$, there exists at least one $Y$-node $u'$ in 
in the path from $r_i$ to $r_j$. Therefore, this contradicts the assumption that $u$ is the closest 
descendant of $r_i$ which is a $Y$-node. 
%$\square$
\end{proof}

Let $T_1,...,T_{k}$ be the connected components of $T'$ where $k\leq |\nodes(T,Y)|$. 
For each $i\in \{1,...,k\}$, let $G[T_i]$ be the subgraph of $\graph(\tensornetwork)$
induced by the vertices $\gamma(\leaves(T_i))$. 

\begin{lemma}
\label{lemma:NumberComponentsGraph}
For each $i\in \{1,...,k\}$, the graph $G[T_i]$ has at most 
$2w$ connected components. 
Additionally, there are at most $2w$ edges 
with one endpoint in $G[T_i]$ and another endpoint in $\graph(\tensornetwork)\backslash G[T_i]$. 
\end{lemma}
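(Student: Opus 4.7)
The plan is to reduce both parts of the lemma to a single structural fact about $T_i$: the set $B_i^{\mathrm{down}}$ of $Y$-nodes that appear as $T$-children of some node of $T_i$ has cardinality at most one. Once this is in place, the edge-cut bound follows from applying the carving-width hypothesis at $r_i$ and (if needed) at the unique exit $Y$-node, and the connected-component bound follows from the connectedness of $\graph(\tensornetwork)$.

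I would prove the structural claim in two stages. A local observation first: if an internal $u\in T_i$ had two $Y$-children, then both subtrees rooted at $u$'s $T$-children would contain $Y$-leaves, and by Definition~\ref{definition:Ynode} $u$ itself would be a $Y$-node, contradicting $u\in T_i$. For the global argument, suppose by contradiction that distinct $u_1,u_2\in T_i$ have respective $Y$-children $y_1$ and $y_2$, and let $z$ be the lowest common ancestor of $y_1,y_2$ in $T$. If neither $y_j$ is an ancestor of the other, then $y_1$ and $y_2$ lie in different subtrees of $z$, both of which therefore contain $Y$-leaves, so $z$ is a $Y$-node. But $z$ lies on the $T$-path from $u_1$ to $u_2$, a path entirely contained in $T_i$ because $T_i$ is a connected component of $T\setminus\nodes(T,Y)$, contradiction. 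In the remaining sub-case (say $y_1$ is a proper ancestor of $y_2$), $u_2$ is forced to be a proper descendant of $y_1$, so the $T$-path from $r_i$ down to $u_2$ has to cross the $Y$-node $y_1$, again contradicting $u_2\in T_i$.

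With $|B_i^{\mathrm{down}}|\le 1$ established, let $L=\gamma(\leaves(T_i))$. If $B_i^{\mathrm{down}}=\emptyset$, then $T[r_i]$ contains no $Y$-node whatsoever, so $T[r_i]\subseteq T_i$, $L=V(r_i)$, and the carving-width hypothesis directly gives $|E(L)|=|E(V(r_i))|\le w$. If $B_i^{\mathrm{down}}=\{y^\ast\}$, then one shows $T_i=T[r_i]\setminus T[y^\ast]$, whence $L=V(r_i)\setminus V(y^\ast)$. Every edge in $E(L)$ has its non-$L$ endpoint either outside $V(r_i)$---placing the edge in $E(V(r_i))$---or inside $V(y^\ast)$---placing the edge in $E(V(y^\ast))$. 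Both of these cuts have size at most $w$, so $|E(L)|\le 2w$. Finally, since $\graph(\tensornetwork)$ is connected, every connected component of $G[T_i]$ must be incident to at least one edge in $E(L)$ (otherwise that component would be disconnected from $V\setminus L$ and hence from the rest of the graph), so the number of connected components is at most $|E(L)|\le 2w$.

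I expect the main obstacle to be the global bound $|B_i^{\mathrm{down}}|\le 1$. The local bound ``at most one $Y$-child per $T_i$-node'' is essentially a direct reading of the definition of a $Y$-node, but on its own it yields only $|E(L)|\lesssim w\cdot|T_i|$, which is far too weak for the intended near-quadratic lower bound. The LCA argument is precisely what upgrades ``at most one $Y$-child per $T_i$-node'' to ``at most one $Y$-child across the entire $T_i$'', and it relies crucially on the interaction between the recursive definition of a $Y$-node and the maximality of $T_i$ as a $Y$-free subtree of $T$.
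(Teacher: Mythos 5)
Your proof is correct and follows essentially the same route as the paper's: identify the unique ``exit'' $Y$-node $u_i=y^\ast$ below $r_i$ (the paper establishes this uniqueness inside the proof of Lemma~\ref{lemma:NumberComponentsTrees} and reuses it here, whereas you re-derive it via the LCA argument), observe that the vertex set of $G[T_i]$ is $V(r_i)\setminus V(y^\ast)$, charge all boundary edges to the two cuts $E(V(r_i))$ and $E(V(u_i))$, each of size at most $w$, and use connectedness of $\graph(\tensornetwork)$ to bound the number of components by the number of boundary edges. The only differences are cosmetic: you prove the edge bound first and deduce the component bound from it, and you treat the case with no exit $Y$-node explicitly, which the paper glosses over.
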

\begin{proof}
Let $r_i$ be the root of $T_i$ and $u_i$ be the closest descendant of $r_i$ with the 
property that $u_i$ is a $Y$-node. Since by assumption the carving decomposition $(T,\gamma)$  has width $w$, 
we have that $|E(V(r_i))|\leq w$ and $|E(V(u_i))|\leq w$. Suppose
for contradiction that the graph $G[T_i]$ has at least $2w+1$ connected components.
Let $C_{i,1},...,C_{i,c_i}$ be the connected components of $G[T_i]$, where $c_i \geq 2w+1$.
Since the graph $\graph(\tensornetwork)$ is connected, for each $j\in \{1,...,c_i\}$ 
there exists at least one edge with an endpoint in $C_{i,j}$ and another endpoint in 
$V(u_i)\cup (V\backslash V(r_i))$. This implies that $|E(V(u_i))| + |E(V(r_i))| \geq 2w+1$, and therefore 
we have that $|E(V(u_i))|\geq w+1$ or $E(V(r_i))\geq w+1$. But this 
contradicts the assumption that the carving-width of $(T,\gamma)$ is at most $w$. Therefore 
$G(T_i)$ has at most $2w$ connected components.

The proof of the second statement is also by contradiction. Assume that there are at least $2w+1$ edges 
with one endpoint in $G[T_i]$ and other endpoint in $\graph(\tensornetwork)\backslash G[T_i]$. Since 
all vertices in $\graph(\tensornetwork)\backslash G[T_i]$ are mapped to leaves in $\leaves(T)\backslash \leaves(T_i)$, 
we have that $|E(V(u_i))| + |E(V(r_i))|\geq 2w+1$. But then $|E(V(u_i))|\geq w+1$ or $|E(V(r_i))|\geq w+1$. 
This contradicts the assumption that the carving width of $(T,\gamma)$ is $w$.
%$\square$
\end{proof}

Finally, we are in a position to state and prove the main theorem of this section.

\begin{theorem}[Tensor Network Reduction]
\label{theorem:TensorNetworkReduction}
Let $\tensornetwork$ be an algebraic tensor network of carving width $w$ and algebraic degree $d$ 
computing a function $f:\{0,1\}^{\Yvariableset}\rightarrow \{0,1\}$. Let $l\geq |\Yvariableset|$ be the number of
$Y$-tensors in $\tensornetwork$. Then $f$ can be computed by a tensor network 
$\tensornetwork'$ of size at most $4l(w+1)$, rank at most $2w$, and algebraic degree $d$.  
\end{theorem}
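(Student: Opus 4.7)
The plan is to produce $\tensornetwork'$ from $\tensornetwork$ by keeping every $Y$-tensor intact and replacing each maximal block of non-$Y$-tensors by a single contracted tensor, with the block structure dictated by a carving decomposition. Correctness is essentially free: by Observation~\ref{observation:TensorContraction}, contracting any subset of tensors first does not alter the final rank-$0$ value, so $\finalvalue_{\tensornetwork'}(\alpha)=\finalvalue_{\tensornetwork}(\alpha)$ for every assignment $\alpha$, and $\tensornetwork'$ still computes $f$.

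First I would fix a carving decomposition $(T,\gamma)$ of $\graph(\tensornetwork)$ of width $w$, identify the $Y$-nodes (at most $2l-1$ by Lemma~\ref{lemma:TwoTimesMinusOneYNodes}), and consider the resulting forest $T\setminus\nodes(T,Y)$ with connected components $T_1,\dots,T_k$ where $k\le 2l-1$ by Lemma~\ref{lemma:NumberComponentsTrees}. Lemma~\ref{lemma:NumberComponentsGraph} then guarantees that each induced subgraph $G[T_i]$ on $\gamma(\leaves(T_i))$ has at most $2w$ connected components $C_{i,1},\dots,C_{i,c_i}$ and is linked to the rest of $\graph(\tensornetwork)$ by at most $2w$ edges. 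I would now define $\tensornetwork'$ to consist of every $Y$-tensor of $\tensornetwork$ together with, for each pair $(i,j)$, a single tensor $g_{i,j}$ obtained by contracting in any order all tensors labeling vertices of $C_{i,j}$.

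Three quantitative checks remain. The \emph{size}: $|\tensornetwork'|\le l+\sum_i c_i\le l+(2l-1)\cdot 2w\le 4l(w+1)$. The \emph{rank}: the surviving indices of $g_{i,j}$ correspond exactly to edges of $\graph(\tensornetwork)$ with one endpoint in $C_{i,j}$ and the other outside; since $C_{i,j}$ is a connected component of $G[T_i]$, no such edge stays inside $G[T_i]$, so they all lie among the at most $2w$ edges leaving $G[T_i]$, giving $\rank(g_{i,j})\le 2w$, while each surviving $Y$-tensor sits at a leaf $u$ and so has rank $|E(V(u))|\le w$. Checking that $\tensornetwork'$ is still a valid (and connected) algebraic tensor network is routine, since contracting a connected induced subgraph of $\graph(\tensornetwork)$ preserves both the ``shared-index-in-exactly-two-tensors'' property of Definition~\ref{definition:AlgebraicTensorNetwork} and connectedness.

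The subtlest point, and the main obstacle, is the \emph{algebraic degree}, since Observation~\ref{observation:AdditiveDegree} gives only an additive bound and a single $g_{i,j}$ is potentially the contraction of many tensors. Here the key observation is that all tensors contracted into $g_{i,j}$ are non-$Y$-tensors; in algebraic tensor networks obtained from quantum circuits through Proposition~\ref{proposition:ConversionQuantumCircuitsAlgebraicTensorNetworks}, the only variable-bearing tensors are those sitting at input vertices labeled by variables of $Y$, so non-$Y$-tensors have algebraic degree $0$ and iterated contractions among them remain degree $0$ by Observation~\ref{observation:AdditiveDegree}. Thus the maximum algebraic degree of $\tensornetwork'$ is attained on the untouched $Y$-tensors, yielding $\degree(\tensornetwork')\le d$ and concluding the proof.
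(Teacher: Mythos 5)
Your construction is exactly the paper's: fix a width-$w$ carving decomposition, keep the $Y$-tensors untouched, contract each connected component $C_{i,j}$ of each $G[T_i]$ into a single tensor $g_{i,j}$, and invoke Lemmas \ref{lemma:TwoTimesMinusOneYNodes}, \ref{lemma:NumberComponentsTrees} and \ref{lemma:NumberComponentsGraph} for the size and rank counts; your explicit remark that the untouched $Y$-tensors have rank at most $w$ (being single vertices in a graph of carving width $w$) is a small but welcome addition the paper leaves implicit, as is the appeal to Observation \ref{observation:TensorContraction} for correctness.

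The one place you deviate is the algebraic-degree step, and there your justification is both unnecessary and too narrow: you argue that non-$Y$-tensors have degree $0$ by appealing to the quantum-circuit conversion of Proposition \ref{proposition:ConversionQuantumCircuitsAlgebraicTensorNetworks}, but the theorem is stated for arbitrary algebraic tensor networks computing $f:\{0,1\}^{\Yvariableset}\rightarrow\{0,1\}$, so as written your argument does not cover the general case. The fix is immediate and is what the paper uses: since $\tensornetwork$ computes a function on $\{0,1\}^{\Yvariableset}$ it is a network over $\Yvariableset$, so a tensor that constrains no variable of $\Yvariableset$ contains no variables at all and has algebraic degree $0$; Observation \ref{observation:AdditiveDegree} then keeps every $g_{i,j}$ at degree $0$, and the degree of $\tensornetwork'$ is attained on the unmodified $Y$-tensors, as you conclude.
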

\begin{proof}
Let $(T,\gamma)$ be a carving decomposition of $\graph(\tensornetwork)$ of carving width 
at most $w$. Let $\{T_1,...,T_k\}$ be the connected components of the forest $T\backslash \nodes(T,Y)$. 
Let $G[T_i]$ be the subgraph of $\graph(\tensornetwork)$ induced by the vertices 
$\gamma(\leaves(T_i))$. Finally let $C_{i,1},...,C_{i,c_{i}}$ be the connected components of 
$G[T_i]$. We denote by $\tensornetwork[Y]$ the set of $Y$-tensors of $\tensornetwork$. 
Note that if $\atensor$ is not in $\tensornetwork[Y]$ then $\atensor$ has algebraic degree $0$ 
(since no variable in $Y$ occurs in $\atensor$) and labels some vertex of some connected component $C_{i,j}$.
Conversely, each tensor labeling a vertex of a connected component $C_{i,j}$ has algebraic degree $0$. 
For each $i\in \{1,...,k\}$ and each $j\in \{1,...,c_i\}$, let $g_{i,j}$ 
be the tensor obtained by contracting all tensors labeling vertices of the connected 
component $C_{j,i}$. Note that $g_{i,j}$  has algebraic degree $0$ due to the fact that 
$\degree(\tensorContraction(\atensor,\atensor'))\leq \degree(\atensor)+\degree(\atensor')$ for any contractible 
pair of tensors $\atensor,\atensor'$ (Observation \ref{observation:AdditiveDegree}).
Let 
\begin{equation}
\label{equation:ResultingNetwork}
\tensornetwork' = \tensornetwork[Y] \cup \{g_{i,j}\;|\; i\in \{1,...,k\}, j\in \{1,...,c_i\}\}
\end{equation}
be the resulting tensor network. By Lemma \ref{lemma:NumberComponentsTrees}, $k\leq 2\cdot l -1$.
By Lemma \ref{lemma:NumberComponentsGraph}, we have that for each $i\in \{1,...,k\}$, 
$c_i\leq 2w$. Then we have that the number of algebraic tensors in $\tensornetwork'$ 
is at most $l + (2\cdot l -1)\cdot 2w = 4lw - 2w +l < 4l(w+1)$. 
Since algebraic tensors in $\tensornetwork[Y]$ did not get involved into any contraction, both the ranks and the 
algebraic degrees of these algebraic tensors remain unchanged. Therefore, the algebraic degree of 
the network $\tensornetwork'$ is still $d$. 
Now the rank of each new tensor $g_{i,j}$ in $\tensornetwork'$ is equal to the number of 
edges with one endpoint in $G[T_i]$ and another endpoint in $\graph(\tensornetwork)$. By Lemma \ref{lemma:NumberComponentsGraph}
there are at most $2w$ such edges. Therefore, the rank of $g_{i,j}$ is at most $2w$.
%$\square$
\end{proof}

\section{Number of Functions Computable by Tensor Networks of a Given Size, Rank and Algebraic Degree}
\label{subsection:NumberOfFunctionsAlgebraicTensorNetworks}

Let $Y$ be a set of variables. The main result of this section (Lemma \ref{lemma:UpperBoundNumberFunctions}) 
establishes an upper bound on the number of Boolean functions computable by a tensor network over $Y$ of 
size $m$, rank $r$ and algebraic-degree $d$. 

\begin{lemma}
\label{lemma:UpperBoundNumberFunctions}
Let $Y$ be a finite set of variables. 
For each $m,r,d \in \N$ there exists at most $\exp(2^{O(r)}\cdot |Y|^{d+1}  \cdot m\cdot \log m)$ 
Boolean functions $g:\{0,1\}^{Y}\rightarrow \{0,1\}$ which can be computed by some 
algebraic tensor network over $Y$ of size at most $m$, rank at most $r$ and algebraic-degree at most $d$.
\end{lemma}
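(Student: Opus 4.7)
My plan is to combine a combinatorial count over the underlying graph ``skeleton'' of an algebraic tensor network with an algebraic count over the polynomial coefficients, invoking Warren's connected-component bound for the latter. A skeleton records the index set of each of the $m$ tensors and which pair of tensors shares each index; since each tensor has at most $r$ indices and each index lies in exactly two tensors, the number of such multigraphs on $m$ vertices with at most $rm/2$ edges is bounded by $m^{O(rm)}$, contributing at most $O(rm\log m)$ to the logarithm of the total count.

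Next I fix a skeleton. Each tensor has at most $4^r$ entries, each of which is a polynomial in $\C[Y]$ of degree at most $d$, so has at most $\binom{|Y|+d}{d}=O(|Y|^d)$ monomials, whose complex coefficients I split into real and imaginary parts. The networks with this skeleton are therefore parametrized by a vector $\mathbf{c}\in\R^n$ with $n = 2^{O(r)}\cdot |Y|^d \cdot m$. For each Boolean assignment $\alpha\in\{0,1\}^Y$, evaluating any tensor entry at $\alpha$ yields a linear form in $\mathbf{c}$; by Observation \ref{observation:TensorContraction} the fully contracted scalar $\atensor_{\tensornetwork(\mathbf{c})}(\alpha)$ is independent of the contraction order, and since every individual contraction step is bilinear in its two inputs, induction shows that $\atensor_{\tensornetwork(\mathbf{c})}(\alpha)$ is multilinear in the entries of the $m$ tensors and hence a polynomial of degree $m$ in $\mathbf c$. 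Writing this scalar as $p_1 + i\, p_2$ and squaring, the quantity $q_\alpha(\mathbf c) := \finalvalue^2_{\tensornetwork(\mathbf c)}(\alpha) - \tfrac12$ is a real polynomial of degree $2m$ in $\mathbf c$, and by the definition of the function computed, two parameter vectors yield the same Boolean function precisely when they induce the same sign pattern on the family $\{q_\alpha : \alpha \in \{0,1\}^Y\}$.

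The final step is Warren's theorem: $s$ real polynomials of degree at most $\delta$ in $n$ real variables realize at most $(4e\delta s/n)^n$ sign patterns when $s \geq n$, and at most $2^s$ always. Plugging in $\delta = 2m$, $s = 2^{|Y|}$, and $n = 2^{O(r)}|Y|^d m$: in the regime $s\geq n$ the log of the bound becomes $n\bigl(|Y| + O(\log m)\bigr) \leq 2^{O(r)}|Y|^{d+1} m\log m$, while in the regime $s<n$ we have $\log 2^s = 2^{|Y|} \leq n \leq 2^{O(r)}|Y|^{d+1}m$, which still lies below the target. Multiplying by the $\exp(O(rm\log m))$ skeleton count is absorbed into the stated $\exp(2^{O(r)}|Y|^{d+1}m\log m)$. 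The main technical hurdle I expect will be tracking the degree of $q_\alpha$ in $\mathbf c$ cleanly: a naive analysis along one particular contraction sequence could suggest an exponential blowup, so invoking Observation \ref{observation:TensorContraction} together with the bilinearity of each individual contraction step is essential to get the sharp degree bound of $2m$; the remaining arithmetic to match the claimed $2^{O(r)}|Y|^{d+1} m \log m$ is routine bookkeeping.
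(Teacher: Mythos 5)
Your proposal follows essentially the same route as the paper: count the unlabeled graph types (the paper's ``types'', at most $m^{O(rm)}$ of them), parametrize the networks of a fixed type by $2^{O(r)}\cdot|Y|^{d}\cdot m$ real coefficients, observe that for each Boolean assignment $\alpha$ the squared value of the contracted network is a real polynomial of degree at most $2m$ in these parameters, and apply Warren's theorem to the resulting family of $2^{|Y|}$ polynomials before multiplying back by the type count. Your explicit handling of the regime $s<\nu$ (via the trivial $2^{s}$ bound) is a small point the paper glosses over, and your degree bound via multilinearity of the contracted scalar in the tensor entries is equivalent to the paper's use of Observation \ref{observation:AdditiveDegree} and Proposition \ref{proposition:PolynomialFromAlgebraicTensorNetwork}; there is no danger of exponential degree blowup along any contraction order, since degrees are additive under a single contraction.

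The one genuine flaw is the threshold. You set $q_{\alpha}=\finalvalue^{2}_{\tensornetwork}(\alpha)-\tfrac{1}{2}$, but the acceptance criterion is $\finalvalue>1/2$ versus $\finalvalue<1/2$, and since $\finalvalue\geq 0$ this is equivalent to $\finalvalue^{2}>1/4$ versus $\finalvalue^{2}<1/4$, \emph{not} to a sign condition on $\finalvalue^{2}-1/2$: for instance $\finalvalue=0.6$ and $\finalvalue=0.4$ compute different bits at $\alpha$ yet both give $\finalvalue^{2}-1/2<0$. Consequently, as written, the sign pattern of $\{q_{\alpha}\}$ does not determine the Boolean function computed, and your claimed ``precisely when'' biconditional fails in both directions, so the count of consistent sign patterns would not bound the number of functions. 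Replacing $1/2$ by $1/4$ (which is exactly what the paper does in Equation \ref{equation:SystemPolynomials}) repairs this; moreover all you need is the injection sending each computable function $h$ to the sign assignment $\diamond_{\alpha}={>}$ iff $h(\alpha)=1$, which is consistent because a parameter vector realizing $h$ satisfies all the corresponding strict inequalities. With that correction the remaining bookkeeping matches the paper's bound $\exp(2^{O(r)}\cdot|Y|^{d+1}\cdot m\cdot\log m)$.
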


We will prove Lemma \ref{lemma:UpperBoundNumberFunctions} using 
the {\em connected component counting}  method, an algebraic geometric technique developed 
by Warren in \cite{Warren1968lower}.

\newcommand{\inequality}{\diamond}
\begin{definition}[Sign-Assignment]
\label{definition:SignAssignment}
Let $W$ be a set of variables and let $P = (p_1,p_2,...p_s)$ be a sequence of real polynomials in $\R[W]$.
A $(+,-)$-sign assignment for $P$ is a sequence of inequalities  
$S = (p_1 \inequality_1 0,\;p_2\inequality_2 0,\;...,\;p_s\inequality_s 0)$
where for each $i\in \{1,...,s\}$, $\inequality_i\in \{<,>\}$. 
\end{definition}

\newcommand{\otherassignment}{\beta}

We say that a $(+,-)$-sign assignment $S = (p_1 \inequality_1 0,\;p_2\inequality_2 0,\;...,\;p_s\inequality_s 0)$ 
is {\em consistent} if $S$ is solvable. In other words, $S$ is consistent if there exists an assignment 
$\otherassignment:W\rightarrow \R$ of the variables in $W$ such that for every $i\in \{1,...,s\}$, the 
inequality $p_i(\otherassignment)\diamond_i 0$ is satisfied. 
The following theorem establishes an upper-bound for the number of consistent 
$(+,-)$-sign assignments for a sequence of polynomials $P$ in terms of three parameters: 
the number of variables in $W$, the number of polynomials in $P$, and the maximum degree of a polynomial in $P$. 
Below, $e\approx 2.71$ is the Euler number.  

\begin{theorem}[Warren 1968. Theorem 3 of \cite{Warren1968lower}]
\label{theorem:WarrenTheorem}
Let $P = (p_1,p_2,...,p_s)$ be real polynomials in $\nu$ variables, each of degree 
at most $D\geq 1$. If $s\geq \nu$, then the number of consistent $(+,-)$-sign 
assignments for $P$ is at most $\left(\frac{4\cdot e\cdot D\cdot s}{\nu}\right)^{\nu}$.
\end{theorem}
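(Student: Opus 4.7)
The plan is to separate the count into two stages: a combinatorial stage (choice of the graph $\graph(\tensornetwork)$) and a continuous stage (choice of the polynomial entries of the tensors), and then apply the connected-component counting method via Warren's theorem (Theorem \ref{theorem:WarrenTheorem}) to the continuous stage. For the combinatorial stage, since each tensor has at most $r$ index slots and each index is shared by exactly two tensors, the number of possible graphs $\graph(\tensornetwork)$ with $m$ vertices and rank at most $r$ is bounded by $(mr)^{O(mr)}$, whose logarithm $O(mr \log m)$ is absorbed by the target bound.

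For the continuous stage, fix a combinatorial structure. Each entry of each tensor is a polynomial in $\C[Y]$ of degree at most $d$, hence is determined by at most $\binom{|Y|+d}{d}=O(|Y|^d)$ complex monomial coefficients; since each of the $m$ tensors has at most $4^{r}$ entries, the network is specified by a real parameter vector $\mathbf{z}\in\R^{\nu}$ with
\[
\nu \;=\; 2^{O(r)}\cdot |Y|^{d}\cdot m,
\]
counting real and imaginary parts separately. For each Boolean assignment $\alpha\in\{0,1\}^{Y}$, evaluating a tensor entry at $\alpha$ yields a complex number that is \emph{linear} in $\mathbf{z}$ (the linear coefficients are the monomial values $\in\{0,1\}$). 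Because the contraction $\atensor_{\tensornetwork}(\alpha)$, by Observation \ref{observation:TensorContraction}, is a sum over assignments to the internal indices in which every summand is a product of exactly one entry from each of the $m$ tensors, we obtain $\atensor_{\tensornetwork}(\alpha)=p_{1}^{\alpha}(\mathbf{z})+i\cdot p_{2}^{\alpha}(\mathbf{z})$ with $p_{1}^{\alpha},p_{2}^{\alpha}\in\R[\mathbf{z}]$ of degree at most $m$ each. By the definition of function-computation just preceding the lemma, the Boolean function computed by $\tensornetwork$ is determined by the sign pattern of the family of $s=2^{|Y|}$ real polynomials
\[
q_{\alpha}(\mathbf{z}) \;=\; p_{1}^{\alpha}(\mathbf{z})^{2}+p_{2}^{\alpha}(\mathbf{z})^{2}-\tfrac{1}{4}, \qquad \alpha\in\{0,1\}^{Y},
\]
each of degree at most $D=2m$ in the $\nu$ variables.

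Finally, I would apply Warren's theorem to $(q_{\alpha})_{\alpha}$. In the main case $s\geq \nu$, the number of consistent $(+,-)$-sign assignments is at most $(4eDs/\nu)^{\nu}$; taking logarithms yields
\[
\nu\cdot O(\log D+\log s)\;=\;2^{O(r)}\cdot |Y|^{d}\cdot m\cdot(\log m+|Y|)\;\leq\;2^{O(r)}\cdot |Y|^{d+1}\cdot m\log m.
\]
The degenerate case $s<\nu$ is handled by the trivial bound of $2^{s}$ total sign patterns, which is still subsumed by the target since $|Y|\log m\geq 1$. Multiplying by the $\exp(O(mr\log m))$ combinatorial choices preserves the bound, yielding the claimed $\exp(2^{O(r)}\cdot |Y|^{d+1}\cdot m\log m)$ functions in total. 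The main technical subtlety to verify is the degree claim: showing that $p_{1}^{\alpha},p_{2}^{\alpha}$ have degree at most $m$ in $\mathbf{z}$ requires examining the fully contracted network symbolically and observing that every monomial in its expansion uses exactly one linearly-parameterized coefficient from each of the $m$ tensors, which follows from the order-independence of contraction (Observation \ref{observation:TensorContraction}).
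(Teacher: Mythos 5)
Your proposal does not prove the statement you were asked to prove. The statement is Warren's theorem itself (Theorem~3 of \cite{Warren1968lower}): a result in real algebraic geometry bounding the number of consistent $(+,-)$-sign assignments of a system of $s$ real polynomials of degree at most $D$ in $\nu$ variables by $\left(\frac{4eDs}{\nu}\right)^{\nu}$. What you have written is instead a proof of Lemma~\ref{lemma:UpperBoundNumberFunctions} (the count of Boolean functions computable by algebraic tensor networks of given size, rank and degree), and your argument explicitly \emph{invokes} Warren's theorem as a black box in its final step (``Finally, I would apply Warren's theorem to $(q_{\alpha})_{\alpha}$\ldots''). As a proof of the assigned statement this is circular: you cannot establish Warren's bound by appealing to Warren's bound.

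To actually prove the statement you would need a genuinely different kind of argument, namely the connected-component counting method: one bounds the number of sign-consistent cells by the number of connected components of the complement in $\R^{\nu}$ of the union of the zero sets of (perturbations of) the $p_i$, and then bounds that number of components via Oleinik--Petrovsky/Milnor--Thom-type estimates on the Betti numbers of real algebraic varieties. None of that machinery appears in your proposal. Note that the paper itself does not reprove Warren's theorem either --- it imports it by citation --- so the correct response here would have been either to reproduce Warren's algebraic-geometric argument or to state clearly that the result is external and outline its proof strategy. Your reduction of the function-counting lemma to Warren's theorem is, on its own terms, a reasonable reconstruction of the paper's Section~\ref{subsection:NumberOfFunctionsAlgebraicTensorNetworks}, but it answers the wrong question.
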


\newcommand{\forget}{\mathit{forget}}
\newcommand{\typenetwork}{\mathit{type}}

Let $\tensornetwork$ be an algebraic tensor network and let $\graph(\tensornetwork) = (V,E,\tensorlabeling,\indexlabeling)$
be the graph associated with $\tensornetwork$. The type of $\tensornetwork$ is defined as 
$\typenetwork(\tensornetwork) = (V,E)$. In other words, the type of $\tensornetwork$ is the 
{\em unlabeled} graph obtained from $\graph(\tensornetwork)$ by forgetting vertex-labels and edge-labels. 

\begin{proposition}
\label{proposition:NumberOfTypes}
There are at most $m^{r\cdot m}$ types of tensor networks of rank $r$ containing $m$
tensors. 
\end{proposition}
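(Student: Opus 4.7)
The plan is to reduce the count to multigraphs with a vertex-degree bound and then use an elementary slot encoding. First I would observe that, for $\tensornetwork = [g_1,\dots,g_m]$, the underlying graph $\graph(\tensornetwork)$ has vertex set $\{v_1,\dots,v_m\}$ and is loop-free, since each index appearing in $\tensornetwork$ belongs to exactly two distinct index sets; moreover each vertex $v_j$ has degree $|\indexset(g_j)|=\rank(g_j)\leq r$. Passing to $\typenetwork(\tensornetwork)$ simply erases the vertex-labels $\tensorlabeling$ and the edge-labels $\indexlabeling$ while preserving multiplicities, so counting types is equivalent to counting loop-free multigraphs on $\{v_1,\dots,v_m\}$ with maximum degree at most $r$.

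Next, I would encode each such multigraph by a slot function. Assign to every vertex $v_j$ an ordered list of $r$ \emph{slots} and consider functions $\phi:\{1,\dots,m\}\times\{1,\dots,r\}\to\{1,\dots,m\}\cup\{*\}$ subject to $\phi(j,k)\neq j$, where the value $\phi(j,k)=j'$ is read as ``the $k$-th edge incident to $v_j$ leads to $v_{j'}$'' and $\phi(j,k)=*$ marks an unused slot. Each of the $r\cdot m$ slots admits at most $m$ values ($*$ together with the $m-1$ vertices other than $v_j$), yielding at most $m^{r\cdot m}$ slot functions.

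Finally, I would exhibit an injection from types to slot functions by fixing a canonical enumeration at each vertex: list the multiset of neighbours of $v_j$ in non-decreasing order of neighbour index (breaking ties by repetition), place them in the first $\degree(v_j)$ slots, and set the remaining slots to $*$. Distinct multigraphs differ in the neighbour multiset at some vertex and therefore receive different canonical encodings, yielding the claimed bound $m^{r\cdot m}$. No step poses a real obstacle; the only care needed is in fixing the canonical ordering so that the encoding is both well-defined and injective, and the exponent $r\cdot m$ is the natural one since it counts the total number of index-slots across all $m$ tensors.
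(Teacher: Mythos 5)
Your argument is correct and follows essentially the same route as the paper: bound the per-vertex choices by $m^{r}$ (your $r$ slots, each with at most $m$ possible values) and multiply over the $m$ vertices to get $m^{r\cdot m}$. Your version merely spells out the multigraph reduction and the injectivity of the canonical slot encoding, which the paper leaves implicit.
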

\begin{proof}
Let $\tensornetwork$ be a tensor of rank $r$ containing $m$ tensors. Then 
$\typenetwork(\tensornetwork)$ is a graph with at most $m$ vertices, and 
degree at most $r$. For each vertex $v$ in such a graph, there are at 
most $m^{r}$ ways of connecting $v$ to other $r$ vertices. Therefore,
there are at most $(m^{r})^{m}= m^{r\cdot m}$ such graphs. 
%$\square$
\end{proof}

Let $Y$ be a set of variables. We denote by $\monomials(Y,d)$ the set of 
monomials in $Y$ of degree at most $d$. Note that $|\monomials(Y,d)|\leq |Y|^{d}$.
Now let $G$ be a fixed type of algebraic tensor network of rank $r$ and size $m$. 
We will establish an upper bound on the number of functions computable by
tensor networks over $Y$ of algebraic-degree $d$ and type $G$.
Let $\tensornetwork = [g_1,...,g_m]$ be such a tensor network.
Since $\tensornetwork$ has algebraic degree $d$, 
each entry of each algebraic tensor $\atensor_j$ in $\tensornetwork$ is a complex polynomial 
$p = \sum_{M\in \monomials(Y,d)} (\realpart_M + \complexpart_M\cdot i)M$ in $Y$ of degree at most $d$, 
where $\realpart_M$ and $\complexpart_M$ are real numbers. 
Therefore, each such  polynomial can be specified by at most $2\cdot |Y|^{d}$ real numbers.
Since  $\atensor$ has rank at most $r$, $\atensor_j$ has at most $4^{r}$ entries. Finally,
$\tensornetwork$ has $m$ tensors. Therefore, if we let $\mu = 4^{r}\cdot m\cdot |Y|^d$, 
the whole tensor network $\tensornetwork$ can be specified by a sequence of $2\cdot \mu$ 
real numbers $\realpart_1,...,\realpart_{\mu},\complexpart_1,...,\complexpart_{\mu}$. We let 
$\tensornetwork[\realpart_1,...,\realpart_{\mu},\complexpart_1,...,\complexpart_{\mu}]$ 
be the algebraic tensor network over $Y$ of rank at most $r$, size $m$ and algebraic-degree at most $d$
specified by this sequence.

Now, regard $\realpartvar_1,...,\realpartvar_{\mu},\complexpartvar_1,...,\complexpartvar_{\mu}$ as real 
variables. Then each entry of each tensor in the network 
$\tensornetwork[\realpartvar_1,...,\realpartvar_{\mu},\complexpartvar_1,...,\complexpartvar_{\mu}]$
is a complex polynomial $p$ of degree at most $d+1$ in the variables 
$Y \cup \{\realpartvar_1,...,\realpartvar_{\mu},\complexpartvar_1,...,\complexpartvar_{\mu}\}$.
Additionally, each term of $p$ has a single occurrence of a variable in 
$\{\realpartvar_1,...,\realpartvar_{\mu},\complexpartvar_1,...,\complexpartvar_{\mu}\}$.
Let $\boolassignment:Y\rightarrow \{0,1\}$ be a Boolean assignment for the variables $Y$, and 
let $\tensornetwork[\realpartvar_1,...,\realpartvar_{\mu},\complexpartvar_1,...,\complexpartvar_{\mu}](\alpha)$
be the algebraic tensor network obtained by substituting the value $\boolassignment(x)$ for each 
variable $x\in Y$ occurring in 
$\tensornetwork[\realpartvar_1,...,\realpartvar_{\mu},\complexpartvar_1,...,\complexpartvar_{\mu}]$. Then 
$\tensornetwork[\realpartvar_1,...,\realpartvar_{\mu},\complexpartvar_1,...,\complexpartvar_{\mu}](\alpha)$
is an algebraic tensor network of over the real variables 
$\{\realpartvar_1,...,\realpartvar_{\mu},\complexpartvar_1,...,\complexpartvar_{\mu}\}$  
whose algebraic degree is at most $1$. Therefore, the total degree of this network is at most $m$, 
and by Proposition \ref{proposition:PolynomialFromAlgebraicTensorNetwork}, the polynomial 
\begin{equation}
\label{equation:PolynomialFromNetwork}
\polynomial_{\alpha}(\realpartvar_1,...,\realpartvar_{\mu},\complexpartvar_1,...,\complexpartvar_{\mu}) = 
\finalvalue_{\tensornetwork[\realpartvar_1,...,\realpartvar_{\mu},\complexpartvar_1,...,\complexpartvar_{\mu}](\alpha)}^{2}
\end{equation}
is a real polynomial in $\R[\{\realpartvar_1,...,\realpartvar_{\mu},\complexpartvar_1,...,\complexpartvar_{\mu}\}]$ of 
degree at most $2\cdot m$.

Let $h:\{0,1\}^{Y}\rightarrow \{0,1\}$ be a Boolean function on variables $Y$. 
For each assignment $\alpha\in \{0,1\}^{Y}$, let $\diamond_{\alpha}$ be the {\em greater-than} symbol $>$ if 
$h(\alpha)=1$, and the {\em less-than} symbol $<$ if $h(\alpha)=0$. 
Consider the following system of $2^{|Y|}$ polynomials, indexed by Boolean assignments $\alpha\in \{0,1\}^{Y}$.

\begin{equation}
\label{equation:SystemPolynomials}
\polynomial_{\alpha}(\realpartvar_1,...,\realpartvar_{\mu},\complexpartvar_1,...,\complexpartvar_{\mu}) - 1/4
\diamond_{\alpha} 0, \hspace{1cm} \alpha\in \{0,1\}^Y
\end{equation}

Assume that $h:\{0,1\}^{Y}\rightarrow \{0,1\}$ is computable by an algebraic tensor network of size $m$, rank $r$, 
algebraic degree $d$, and type $G$. Then for some real numbers 
$\realpart_1^{h},...,\realpart_{\mu}^{h},\complexpart_1^{h},...,\complexpart_{\mu}^{h}$
the algebraic tensor network $\tensornetwork_{h} = \tensornetwork[\realpart_1^{h},...,\realpart_{\mu}^{h},\complexpart_1^{h},...,\complexpart_{\mu}^{h}]$
computes $h$. In other words, for each Boolean assignment $\alpha:\{0,1\}^{Y}\rightarrow \{0,1\}$, we have that 
$\finalvalue_{\tensornetwork_h}(\alpha)$ is greater than $1/2$ if $h(\alpha)=1$, and less than $1/2$ if $h(\alpha)=0$.
This implies that $p(\realpart_1^{h},...,\realpart_{\mu}^{h},\complexpart_1^{h},...,\complexpart_{\mu}^{h})$ is greater 
than $1/4$ if $h(\alpha)=1$, and less than $1/4$ if $h(\alpha)=0$. 
Therefore the sequence $\realpart_1^{h},...,\realpart_{\mu}^{h},\complexpart_1^{h},...,\complexpart_{\mu}^{h}$ satisfies 
all inequalities of the system given in Equation \ref{equation:SystemPolynomials}. 

The discussion above shows that the number of Boolean functions computable by an algebraic tensor network over $Y$ of size at most 
$m$, rank at most $r$, algebraic degree at most $d$, and type $G$ is upper bounded by the number of consistent sign assignments
for the system of inequalities of Equation \ref{equation:SystemPolynomials}. Therefore we can use Theorem \ref{theorem:WarrenTheorem}
to estimate this number. 
By setting $s=2^{|Y|}$, $\nu = 2\mu =2\cdot 4^{r}\cdot m\cdot |Y|^d$, and $D = 2m$  in Theorem \ref{theorem:WarrenTheorem}
we have that the number of consistent assignments for the system of polynomials in Equation \ref{equation:SystemPolynomials}
is at most

$$
\left(\frac{4\cdot e\cdot (2m)\cdot 2^{|Y|}}{2\cdot 4^r\cdot m\cdot |Y|^{d+1}}\right)^{2\cdot 4^r\cdot m\cdot |Y|^{d}}
\leq \exp(2^{O(r)}\cdot |Y|^{d+1} \cdot m). 
$$

Therefore, there are at most $\exp(2^{O(r)}\cdot |Y|^{d+1} \cdot m)$ functions computable by some tensor network over $Y$ 
of algebraic degree at most $d$, with type $G$. Since, by Proposition \ref{proposition:NumberOfTypes},
there are at most $m^{r\cdot m} \leq \exp(O(r\cdot m\cdot \log m))$ types of network of rank $r$ and size $m$, 
we have that the total number of 
functions computable by an algebraic tensor network over $Y$ of algebraic-degree $d$, rank $r$ and size $m$ is upper bounded by 
$$\exp(2^{O(r)}\cdot  |Y|^{d+1}  \cdot m + O(r\cdot m\cdot \log m))\leq \exp(2^{O(r)}\cdot  |Y|^{d+1} \cdot m\cdot \log m).$$ 
This proves Lemma \ref{lemma:UpperBoundNumberFunctions}. $\square$

\section{Upper Bounding the Number of Subfunctions of a Function} 
\label{section:MainTechnical}

Let $X=\{x_1,...,x_n\}$ be a set of variables, 
$\afunction:\{0,1\}^X\rightarrow \{0,1\}$ be a Boolean function on $X$, and $Y\subseteq X$
be a subset of variables of $X$.  We denote by $N_{\afunction}(Y)$ the number of distinct functions 
obtained from $f$ by initializing all variables in $X\backslash Y$ with values in $\{0,1\}$. 
Now assume that $f$ is computed by an algebraic tensor network $\tensornetwork$. 
The next theorem establishes an upper bound for $N_{\afunction}(Y)$ in terms of
number of $Y$-tensors in $\tensornetwork$, and in terms of the 
treewidth, rank and algebraic degree of $\tensornetwork$. 

\begin{theorem}[Main Technical Theorem]
\label{theorem:UpperBoundFunctionsTreewidth}
Let $f:\{0,1\}^{X} \rightarrow \{0,1\}$ be a function computable by an algebraic tensor network $\tensornetwork$ 
of treewidth $t$, rank $k$, and algebraic-degree $d$. 
Let $Y\subseteq X$, and $l$ be the number of $Y$-tensors in $\tensornetwork$. 
Then $N_{\afunction}(Y)$ is at most
$\exp\left(2^{O(r\cdot t)}\cdot |Y|^{d+1}\cdot l \cdot \log l \right)$. 
\end{theorem}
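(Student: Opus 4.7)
The plan is to reduce the task of counting subfunctions of $\afunction$ to the counting bound already established in Lemma \ref{lemma:UpperBoundNumberFunctions}, using the reduction theorem of Section \ref{section:TensorNetworkReduction} as a bridge. First I would fix an algebraic tensor network $\tensornetwork$ computing $\afunction$ of treewidth $t$, rank $r$, and algebraic-degree $d$. For each restriction $\boldsymbol{\sigma}\in \{0,1\}^{X\setminus Y}$ of the variables outside $Y$, substituting these Boolean values into the polynomial entries of the non-$Y$-tensors yields an algebraic tensor network $\tensornetwork_{\boldsymbol{\sigma}}$ over $Y$ that computes the corresponding subfunction $\afunction_{\boldsymbol{\sigma}}:\{0,1\}^Y\to \{0,1\}$. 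The graph $\graph(\tensornetwork_{\boldsymbol{\sigma}})$ is the same as $\graph(\tensornetwork)$, so the treewidth, rank, algebraic-degree, and the set of $Y$-tensors (of size $l$) are preserved.

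Next, I would translate treewidth into carving width so that Theorem \ref{theorem:TensorNetworkReduction} applies. The maximum degree of $\graph(\tensornetwork_{\boldsymbol{\sigma}})$ is at most the rank $r$ of $\tensornetwork$, because every edge at a vertex $\vertexv_j$ corresponds to an element of $\indexset(g_j)$. Hence by Lemma \ref{lemma:CarvingVsTreewidth} the carving width satisfies $w \leq O(r\cdot t)$. Applying Theorem \ref{theorem:TensorNetworkReduction} with this $w$ produces, for each $\boldsymbol{\sigma}$, an algebraic tensor network $\tensornetwork'_{\boldsymbol{\sigma}}$ computing $\afunction_{\boldsymbol{\sigma}}$ of size at most $4l(w+1) = O(l\cdot r\cdot t)$, rank at most $2w = O(r\cdot t)$, and algebraic-degree still $d$.

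Therefore every subfunction in the image of $\boldsymbol{\sigma}\mapsto \afunction_{\boldsymbol{\sigma}}$ is computable by \emph{some} algebraic tensor network over $Y$ of size $m = O(l\cdot r\cdot t)$, rank $r' = O(r\cdot t)$, and algebraic-degree $d$. I would then plug these parameters into Lemma \ref{lemma:UpperBoundNumberFunctions} to obtain
\[
N_{\afunction}(Y) \;\leq\; \exp\!\left(2^{O(r')}\cdot |Y|^{d+1}\cdot m \cdot \log m\right)
\;=\; \exp\!\left(2^{O(r\cdot t)}\cdot |Y|^{d+1}\cdot l\cdot r\cdot t\cdot \log(l\cdot r\cdot t)\right).
\]
Finally, the factors $r\cdot t$ and $\log(r\cdot t)$ can be absorbed into the $2^{O(r\cdot t)}$ term (by writing $r\cdot t \leq 2^{r\cdot t}$), and $\log(l\cdot r\cdot t) = O(\log l + r\cdot t)$ is similarly absorbed, yielding the claimed bound $\exp\!\bigl(2^{O(r\cdot t)}\cdot |Y|^{d+1}\cdot l\cdot \log l\bigr)$.

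The only step requiring care is verifying that the substitution $\boldsymbol{\sigma}$ genuinely preserves the tensor network structure and the count of $Y$-tensors, and that the maximum degree of $\graph(\tensornetwork)$ is bounded by the rank; these are straightforward from Definitions \ref{definition:AlgebraicTensor}--\ref{definition:AlgebraicTensorNetwork}. The main conceptual content is the combination of the reduction theorem with the carving-vs-treewidth inequality, both of which are in place; the rest is bookkeeping of parameters inside the exponential.
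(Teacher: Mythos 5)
Your proposal is correct and follows essentially the same route as the paper: restrict the variables outside $Y$, bound the carving width by $O(r\cdot t)$ via Lemma \ref{lemma:CarvingVsTreewidth} (using that the maximum degree is at most the rank), apply Theorem \ref{theorem:TensorNetworkReduction} to get a network over $Y$ of size $O(r\cdot t\cdot l)$ and rank $O(r\cdot t)$, and then invoke Lemma \ref{lemma:UpperBoundNumberFunctions} and absorb the polynomial factors into the $2^{O(r\cdot t)}$ term. The extra care you take about substitution preserving the structure (it can only decrease the number of $Y$-tensors, which is harmless) matches what the paper leaves implicit.
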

\begin{proof}
Let $f:\{0,1\}^{X}\rightarrow \{0,1\}$ be a function computable by an algebraic tensor 
network $\tensornetwork$ over $X$ of rank $r$ and algebraic-degree $d$. Since 
$\graph(\tensornetwork)$ has treewidth $t$ and maximum (vertex) degree $r$, Lemma \ref{lemma:CarvingVsTreewidth}
implies that the carving width of $\graph(\tensornetwork)$ is at most $w = O(r\cdot t)$. 

Let $Y\subseteq X$, and $l$ be the number of $Y$-tensors in $\tensornetwork$. 
Let $\beta:\{0,1\}^{X\backslash Y}\rightarrow\{0,1\}$ be an assignment 
of the variables in $X\backslash Y$, and let $\tensornetwork(\beta)$ be the algebraic tensor network over $Y$, 
obtained by initializing the variables in $X\backslash Y$ according 
to the assignment $\beta$. Then $\tensornetwork(\beta)$ computes the function $g:\{0,1\}^{Y}\rightarrow \{0,1\}$
which is obtained from $f$ by restricting the variables in $X\backslash Y$ according to $\beta$.

By Theorem \ref{theorem:TensorNetworkReduction}, the function $g$ can be computed by an algebraic tensor network 
$\tensornetwork'$ over $Y$ of algebraic degree $d$, rank $r' = O(r\cdot t)$, and size 
$m = O(r\cdot t\cdot l)$. Therefore, by Lemma \ref{lemma:UpperBoundNumberFunctions}
we have that there exist at most 
$$\exp\left(2^{O(r\cdot t)}\cdot |Y|^{d+1}\cdot O(r\cdot t\cdot l \cdot \log (r\cdot t\cdot l))\right) = 
\exp\left(2^{O(r\cdot t)}\cdot |Y|^{d+1}\cdot l \cdot \log l \right) 
$$
Boolean functions $g:\{0,1\}^{Y}\rightarrow \{0,1\}$ which can be obtained from $f$ by initializing the variables 
in $X\backslash Y$ with elements from $\{0,1\}$. 
%$\square$
\end{proof}

\section{Quadratic Lower Bounds For Algebraic Networks and Quantum Circuits of Constant Treewidth}
\label{section:Nechiporuk}

Let $X = \{x_1,...,x_{n}\}$ be a set of $n=2k\log k$ distinct variables partitioned into $k$ blocks 
$Y_1,Y_2, ..., Y_k$, where each block $Y_i$ has $2\log k$ variables. The {\em element distinctness} 
function $\edfunction_n:\{0,1\}^{X}\rightarrow \{0,1\}$ is defined as follows for 
each assignment $s_1,s_2, ... ,s_k$ of the blocks $Y_1, Y_2, ... ,Y_k$ respectively.

\begin{equation}
\label{equation:ElementDistinctness}
\edfunction_n(s_1,s_2,...,s_k) = \left\{ \begin{array}{lll} 
1 & & \mbox{if $s_i\neq s_j$ for $i\neq j$}, \\
0 & & \mbox{otherwise.}
\end{array}\right.
\end{equation} 

The following lemma states that the element distinctness function defined in Equation \ref{equation:ElementDistinctness} 
has many sub-functions.

\begin{lemma}[\cite{Jukna2012}, Section 6.5]
\label{lemma:ElementDistinctness}
Let $\edfunction_n:\{0,1\}^X\rightarrow \{0,1\}$ be the element distinctness function defined in 
Equation \ref{equation:ElementDistinctness}, where $|X|=n$ and 
$X=Y_1\;\dot\cup\; Y_2\;\dot\cup ... \;\dot\cup\; Y_k$ with $|Y_i|=2\log k$. 
Then for each $i\in \{1,...,k\}$, $N_{\edfunction_n}(Y_i) \geq 2^{\Omega(n)}$. 
\end{lemma}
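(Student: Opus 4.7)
The plan is to produce many distinct restrictions of $\edfunction_n$ to the block $Y_i$ by varying the assignments to the other $k-1$ blocks in a controlled way. Since $|Y_i|=2\log k$, the set of assignments $\{0,1\}^{Y_i}$ can be identified with a universe of size $k^2$ via the natural binary encoding; I will write $[k^2]$ for this universe and regard each $s_j\in\{0,1\}^{Y_j}$ as an element of $[k^2]$.

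First I would fix $i$ and, for each $(k-1)$-element subset $S\subseteq [k^2]$, construct an assignment $\beta_S$ to the variables in $X\setminus Y_i$ by sending the $k-1$ distinct elements of $S$ (in any fixed ordering) to the blocks $Y_j$, $j\neq i$. Under $\beta_S$, all $k-1$ strings $s_j$ are pairwise distinct, so by Equation \ref{equation:ElementDistinctness} the restriction $g_{\beta_S}:\{0,1\}^{Y_i}\to\{0,1\}$ satisfies $g_{\beta_S}(s_i)=1$ if and only if $s_i\notin S$. Thus $g_{\beta_S}$ is precisely the indicator function of the complement $[k^2]\setminus S$.

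Different subsets $S\neq S'$ clearly yield different indicator functions (any element of their symmetric difference witnesses this), so the number of subfunctions obtained by restricting the variables in $X\setminus Y_i$ is at least $\binom{k^2}{k-1}$. Using the standard bound $\binom{k^2}{k-1}\geq\bigl(\tfrac{k^2}{k-1}\bigr)^{k-1}\geq k^{k-1}$, together with the relation $n=2k\log k$, I conclude that
$$N_{\edfunction_n}(Y_i)\geq k^{k-1}=2^{(k-1)\log k}=2^{\Omega(n)},$$
which is the claimed bound. The argument is essentially a direct counting once the right family of restrictions is identified, so no step presents a real technical obstacle; the only thing worth being slightly careful about is that the encoding $\{0,1\}^{2\log k}\to[k^2]$ is a bijection, which ensures that every $(k-1)$-subset of $[k^2]$ really is realizable by some honest assignment to the $k-1$ remaining blocks.
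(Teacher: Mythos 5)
Your proof is correct and is essentially the standard counting argument from the cited reference (the paper itself only cites Jukna, Section 6.5, without reproducing a proof): fix pairwise-distinct values on the other $k-1$ blocks, observe the restriction to $Y_i$ is the indicator of the complement of the chosen $(k-1)$-subset of the $k^2$ possible block values, and count $\binom{k^2}{k-1}\geq k^{k-1}=2^{\Omega(n)}$. No issues.
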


The following theorem follows as a combination of Theorem \ref{theorem:UpperBoundFunctionsTreewidth} and 
Lemma \ref{lemma:ElementDistinctness}.

\begin{theorem}
\label{theorem:MainTheoremAlgebraicNetwork}
Let $X$ be a set with $n$ Boolean variables, and let $\delta_n:\{0,1\}^{X}\rightarrow \{0,1\}$ be 
the $n$-bit element distinctness function. Let $\tensornetwork$ be a tensor network of
treewidth $t$, rank $r$ and algebraic degree $d$ computing $\delta_n$. Then $\tensornetwork$ has size 
$$\Omega\left(\frac{n^{2}}{2^{O(r\cdot t)}\cdot (\log n)^{d+3}}\right).$$
\end{theorem}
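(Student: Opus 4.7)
The plan is to deploy the Ne\v{c}iporuk template on the target function $\delta_n$: partition the inputs into logarithmic-size blocks, bound the number of distinct subfunctions contributed by each block both from below (using the hardness of $\delta_n$) and from above (using our main technical theorem), and then sum the per-block bounds over all blocks.

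First I would take the partition $X = Y_1 \,\dot\cup\, Y_2 \,\dot\cup\, \cdots \,\dot\cup\, Y_k$ fixed by the definition of $\delta_n$, with $k = \Theta(n/\log n)$ and $|Y_i| = 2\log k = \Theta(\log n)$. Let $\tensornetwork$ be an algebraic tensor network of treewidth $t$, rank $r$, and algebraic degree $d$ computing $\delta_n$, and for each $i$ let $l_i$ denote the number of $Y_i$-tensors of $\tensornetwork$. Theorem \ref{theorem:UpperBoundFunctionsTreewidth} applied to $f = \delta_n$ and $Y = Y_i$ gives
$$ N_{\delta_n}(Y_i) \;\leq\; \exp\!\left(2^{O(r\cdot t)} \cdot (\log n)^{d+1} \cdot l_i \log l_i\right), $$
while Lemma \ref{lemma:ElementDistinctness} furnishes the matching lower bound $N_{\delta_n}(Y_i) \geq 2^{\Omega(n)}$. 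Taking logarithms and combining yields $l_i \log l_i = \Omega(n/(2^{O(r\cdot t)}(\log n)^{d+1}))$. I may assume $|\tensornetwork| \leq \poly(n)$ (otherwise the desired bound is immediate), hence $\log l_i = O(\log n)$, and so
$$ l_i \;=\; \Omega\!\left(\frac{n}{2^{O(r\cdot t)} (\log n)^{d+2}}\right) \quad \text{for every } i \in \{1,\ldots,k\}. $$

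Finally, I would sum these $k = \Theta(n/\log n)$ per-block bounds. In the regime of interest -- algebraic tensor networks produced by quantum circuits via Proposition \ref{proposition:ConversionQuantumCircuitsAlgebraicTensorNetworks} -- each variable-carrying tensor mentions a single variable, so the families of $Y_i$-tensors are pairwise disjoint across blocks and one has $\sum_i l_i \leq |\tensornetwork|$ on the nose. Consequently
$$ |\tensornetwork| \;\geq\; \sum_{i=1}^{k} l_i \;\geq\; \Omega\!\left(\frac{n}{\log n}\right) \cdot \Omega\!\left(\frac{n}{2^{O(r\cdot t)}(\log n)^{d+2}}\right) \;=\; \Omega\!\left(\frac{n^2}{2^{O(r\cdot t)}(\log n)^{d+3}}\right), $$
which is precisely the claimed lower bound.

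The one spot I expect to need a little care is the disjointness step in the last display: in a fully general algebraic tensor network a single polynomial of degree $d$ could in principle touch variables from several blocks, letting one tensor be counted in multiple $l_i$. The cleanest way around this is to appeal to the normal form of Proposition \ref{proposition:ConversionQuantumCircuitsAlgebraicTensorNetworks}, in which every $Y$-tensor mentions at most one variable, so multi-counting cannot occur. Everything else is straightforward bookkeeping of the two inequalities coming from Theorem \ref{theorem:UpperBoundFunctionsTreewidth} and Lemma \ref{lemma:ElementDistinctness}.
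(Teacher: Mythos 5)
Your proposal follows essentially the same route as the paper's proof: the same per-block application of Theorem \ref{theorem:UpperBoundFunctionsTreewidth} with $|Y_i|=\Theta(\log n)$, the same use of Lemma \ref{lemma:ElementDistinctness} to get $N_{\delta_n}(Y_i)\geq 2^{\Omega(n)}$, the same absorption of $\log l_i$ into an $O(\log n)$ factor (your ``assume $|\tensornetwork|\leq \poly(n)$'' is the paper's case split on $l_i\geq n^2$), and the same summation over the $k=\Theta(n/\log n)$ blocks.

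The one point where you diverge is the final summation. The paper simply asserts that the size of $\tensornetwork$ is at least $\sum_i l_i$; you correctly note that this requires each tensor to be a $Y_i$-tensor for at most one block, since by Definition of $Y$-tensor a single tensor whose polynomial entries constrain variables from several blocks would be counted in several $l_i$. Your concern is legitimate, but your resolution---appealing to the normal form of Proposition \ref{proposition:ConversionQuantumCircuitsAlgebraicTensorNetworks}, where every variable-carrying tensor mentions exactly one variable---only covers networks arising from quantum circuits. Theorem \ref{theorem:MainTheoremAlgebraicNetwork} as stated quantifies over \emph{all} algebraic tensor networks of degree $d$, and for such networks a single degree-$d$ tensor can constrain variables from arbitrarily many blocks, so the disjointness you need is not automatic and cannot be inferred from the rank or degree parameters. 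As written, your argument therefore establishes a restricted version of the theorem (which does suffice for the downstream Theorem \ref{theorem:MainTheoremQuantumCircuits}, its only application in the paper), not the general statement; to claim the theorem in full generality you would either have to justify $\sum_i l_i \leq |\tensornetwork|$ for arbitrary degree-$d$ networks---which the paper itself asserts without argument---or add the single-block (or single-variable-per-tensor) hypothesis explicitly to the statement. Apart from making this scope issue explicit, the bookkeeping is correct and matches the paper's.
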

\begin{proof}
For each $i\in \{1,...,k\}$ let $l_i$ be the number of $Y_i$-nodes in $\tensornetwork$ where 
$Y_i$ is the $i$-th block of variables. If $l_i\geq n^2$, then the theorem is true and there 
is nothing to be proved. Therefore, assume that $l_i<n^2$, and hence that $\log l_i< 2\log n$. 
For each $i\in \{1,...,k\}$, by plugging 
$l_i$ and $|Y_i| = 2\log n$ in Theorem \ref{theorem:UpperBoundFunctionsTreewidth}, we have that
\begin{equation}
\label{equation:NumberFunctions}
N_{\delta_n}(Y_i)\leq \exp\left(2^{O(r\cdot t)}\cdot (\log n)^{d+1} \cdot l_i\cdot \log l_i\right)\leq 
\exp\left(2^{O(r\cdot t)}\cdot (\log n)^{d+2} \cdot l_i \right).
\end{equation}

Now, by Lemma \ref{lemma:ElementDistinctness}, we have that $N_{\delta_n}(Y_i)\geq 2^{\Omega(n)}$, and 
therefore, 

\begin{equation}
\label{equation:Inequalities}
\exp\left(2^{O(r\cdot t)}\cdot (\log n)^{d+2} \cdot l_i  \right) \geq N_{\delta_n}(Y_i) \geq 2^{\Omega(n)}.
\end{equation}

Equation \ref{equation:Inequalities} implies that 
$$l_i \geq \Omega\left(\frac{n}{2^{O(r\cdot t)}\cdot (\log n)^{d+2}}\right).$$ 
Since there are $k=\Omega(\frac{n}{\log n})$ blocks of variables $Y_i$, 
we have that the total number of tensors in $\tensornetwork$, which is greater than $\sum_i l_i$, 
is at least 
$$\Omega\left(\frac{n^2}{2^{O(r\cdot t)}\cdot (\log n)^{d+3}}\right).$$ 
%$\square$
\end{proof}

Finally, our main theorem follows as a corollary of Theorem \ref{theorem:MainTheoremAlgebraicNetwork}.

\begin{theorem}[Main Theorem]
\label{theorem:MainTheoremQuantumCircuits}
Let $X$ be a set with $n$ Boolean variables, and let $\delta_n:\{0,1\}^{X}\rightarrow \{0,1\}$ be 
the $n$-bit element distinctness function. Let $C$ be a quantum circuit over $X$ computing 
$\edfunction_{n}$. If $C$ has treewidth $t$ and all gates in $C$ act on at most 
$r$ qubits, then $C$ has at least 
$\Omega\left(\frac{n^{2}}{2^{O(r\cdot t)}\cdot (\log n)^{4}}\right)$ gates. 
\end{theorem}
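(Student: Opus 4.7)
The plan is to reduce this statement to Theorem \ref{theorem:MainTheoremAlgebraicNetwork} by passing from quantum circuits to algebraic tensor networks. The bridge is Proposition \ref{proposition:ConversionQuantumCircuitsAlgebraicTensorNetworks}, which, given a quantum circuit $C$ of treewidth $t$ all of whose gates act on at most $r$ qubits, produces an algebraic tensor network $\tensornetwork_{C}$ over $X$ with $\finalvalue_{\tensornetwork_{C}}(\alpha) = \mathit{Pr}(C(\alpha))$ for every Boolean assignment $\alpha$. By the output-probability criterion for $C$ computing $\edfunction_n$ (Definition \ref{definition:FunctionFromCircuit}) and the value criterion for $\tensornetwork_C$ computing the same function, $\tensornetwork_C$ computes exactly the same Boolean function as $C$, namely $\edfunction_n$.

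Next I would read off the parameters of $\tensornetwork_{C}$ delivered by Proposition \ref{proposition:ConversionQuantumCircuitsAlgebraicTensorNetworks}: the treewidth is $t$, the rank is at most $2r$, the algebraic degree is $1$, and the number of tensors is linear in the size of $C$ (one tensor per gate, input and output vertex, up to constants). It then suffices to apply Theorem \ref{theorem:MainTheoremAlgebraicNetwork} to $\tensornetwork_{C}$ with parameter choices $r \leftarrow 2r$, $t \leftarrow t$, $d \leftarrow 1$. The theorem yields a lower bound of
\[
\Omega\!\left(\frac{n^{2}}{2^{O(2r\cdot t)}\cdot (\log n)^{1+3}}\right) \;=\; \Omega\!\left(\frac{n^{2}}{2^{O(r\cdot t)}\cdot (\log n)^{4}}\right)
\]
on the number of tensors in $\tensornetwork_{C}$. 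Since this number is linear in the size of $C$, the same bound (up to absorbing a constant into the $O(r\cdot t)$ exponent) holds for the number of gates of $C$.

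There is essentially no genuine obstacle here: all the heavy machinery, namely the tensor-network reduction of Theorem \ref{theorem:TensorNetworkReduction}, the Warren-style counting argument of Lemma \ref{lemma:UpperBoundNumberFunctions}, and the many-subfunctions property of element distinctness in Lemma \ref{lemma:ElementDistinctness}, has already been deployed in the proof of Theorem \ref{theorem:MainTheoremAlgebraicNetwork}. The only care needed is to verify that the conversion in Proposition \ref{proposition:ConversionQuantumCircuitsAlgebraicTensorNetworks} really preserves treewidth (not merely the underlying graph up to constant-factor blow-up), keeps the rank bounded by $2r$, and produces algebraic degree exactly $1$, so that plugging $d=1$ into the bound of Theorem \ref{theorem:MainTheoremAlgebraicNetwork} contributes the factor $(\log n)^{d+3}=(\log n)^{4}$ and nothing worse. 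With these checks in place, the statement follows as an immediate corollary.
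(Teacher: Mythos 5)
Your proposal is correct and follows exactly the same route as the paper's own proof: convert $C$ into the algebraic tensor network $\tensornetwork_C$ of Proposition \ref{proposition:ConversionQuantumCircuitsAlgebraicTensorNetworks} (algebraic degree $1$, treewidth $t$, rank at most $2r$) and then invoke Theorem \ref{theorem:MainTheoremAlgebraicNetwork} with $d=1$ to obtain the $\Omega\bigl(n^{2}/(2^{O(r\cdot t)}\log^{4} n)\bigr)$ bound on the number of tensors, hence on the number of gates. Your extra remarks about checking that the conversion preserves treewidth, rank, and degree, and that the tensor count is linear in the circuit size, are exactly the (implicit) bookkeeping the paper relies on.
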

\begin{proof}
Let $\tensornetwork_C$ be the algebraic tensor network associated with $C$. Then 
$\tensornetwork_C$ has algebraic degree $1$, treewidth $t$, and rank at most $2\cdot r$.
By Theorem \ref{theorem:MainTheoremAlgebraicNetwork}, $\tensornetwork_C$ must 
have at least $\Omega\left(\frac{n^2}{2^{O(r\cdot t)}\cdot (\log n)^{4}}\right)$ tensors, 
and therefore $C$ must have at least this number of gates. 
%$\square$
\end{proof}

\section{Final Comments and Open Problems}

In this work we have shown that any quantum circuit of treewidth at most $t$, 
build up from $r$-qubit gates, requires at least $\Omega(n^2/2^{O(r\cdot t)}\log^{4} n)$ gates 
to compute the element distinctness function $\edfunction_n:\{0,1\}^{n}\rightarrow \{0,1\}$ 
(Theorem \ref{theorem:MainTheoremQuantumCircuits}). This lower bound is robust for three reasons. 
First, it does not assume that the quantum gates belong to any particular finite basis. The only 
requirement is that these gates act on at most $r$ qubits. Second, we do not assume any 
upper bound on the number of bits necessary to represent each entry of such a gate. 
Third, we consider that a function $f:\{0,1\}^{X}\rightarrow \{0,1\}$ is computed by a quantum 
circuit $C$ if the acceptance probability of $C$ on input $\alpha\in \{0,1\}^{X}$ is greater 
than $1/2$ whenever $f(\alpha)=1$, and less than $1/2$ whenever $f(\alpha)=0$. Thus we 
assume no gap between the acceptance and rejection probabilities for a given input $\alpha$.

There are many interesting open problems concerning circuits of constant treewidth. For instance, 
can quantum circuits of treewidth $t$ be polynomially simulated by quantum (or classical) circuits of 
treewidth $t-1$? Can quantum circuits of treewidth $t$ be polynomially simulated by quantum formulas (i.e. quantum circuits 
of treewidth $1$)? Also, we should mention the longstanding open problem of determining whether quantum 
formulas can be polynomially simulated by classical formulas \cite{RoychowdhuryVatan2001quantum}. Progress
towards this question has only been made in the read-once setting. More precisely, it 
has been shown that read-once quantum formulas can be polynomially simulated by classical formulas of same 
size built from Toffoli and NOT gates \cite{ConsentinoKothariPaetznick2013}. Nevertheless this 
simulation breaks down if the read-once condition is removed \cite{ConsentinoKothariPaetznick2013}. 
It would be interesting to determine whether a similar result can be achieved for read-once quantum circuits of constant treewidth. 
Can read-once quantum circuits of treewidth $t$ be polynomially simulated by read-once classical circuits 
of treewidth $t$?

\subsection{Acknowledgements}

The author thanks Christian Komusiewicz for valuable comments and suggestions. 
The author acknowledges support from the Bergen Research Foundation. Part of this work was 
done while the author was at the Czech Academy of Sciences, supported by the 
European Research Council (grant number 339691).

\bibliographystyle{abbrv}
\bibliography{quantumNechiporuk}

\appendix

\section{Proof of Proposition \ref{proposition:ConversionQuantumCircuitsAlgebraicTensorNetworks}}
\label{appendix:ConversionQuantumCircuitsAlgebraicTensorNetworks}

In this section we show that any quantum circuit $C$ with $m$ gates, treewidth $t$, build from $r$-qubit gates, 
can be simulated by an algebraic tensor network $\tensornetwork_{C}$ with $m$ algebraic tensors, 
treewidth $t$, rank $2r$, and algebraic degree $1$. The construction of $\tensornetwork_{C}$ from 
$C$ is based on a construction given in \cite{MarkovShi2008} which converts quantum circuits 
in which all inputs are initialized to tensor networks (i.e. algebraic tensor networks of degree $0$).
Below, we modify this construction to take into consideration 
input vertices that are are labeled with variables. 

Let $C = (V,E,\vertexlabelingfunction,\edgelabelingfunction)$ be a quantum circuit over a set of variables 
$\variableset$. The tensor network $\tensornetwork_{C}$ is obtained by creating a tensor $g_v$ for each 
vertex $v\in V$. 

Let $v$ be an internal vertex of $C$ whose incoming edges are labeled with numbers $\{i_1,...,i_k\}$ and 
outgoing edges are labeled with numbers $\{j_1,...,j_k\}$. Let $v$ be labeled with a unitary matrix 
$U\in \C^{2^{k}\times 2^{k}}$. Then the tensor $g_v$ has index set $\{i_1,...,i_k,j_1,...,j_k\}$, and
the value of $g_v$ on each entry $\sigma_{i_1},...,\sigma_{i_k},\sigma_{j_1},...,\sigma_{j_k}\in \Pi^{2k}$ is defined 
as follows. 

\begin{equation}
\atensor_v(\sigma_{i_1},...,\sigma_{i_k},\sigma_{j_1},...,\sigma_{j_l}) = 
\trace\left( [\sigma_{j_1}^{\dagger}\otimes ...\otimes \sigma_{j_l}^{\dagger}]\cdot  U \cdot 
[\sigma_{i_1}\otimes ...\otimes \sigma_{i_k}] \right).
\end{equation}

Let $v$ be an output vertex whose unique incoming edge is labeled with number $j$. Let 
$v$ be labeled with a $1$-qubit measurement element $M$ in $\C^{2\times 2}$. Then the 
tensor $g_v$ has index set $\{j\}$, and the value of $g_v$ on each entry $\sigma_j\in \Pi$ 
is defined as follows. 

\begin{equation}
\atensor_v(\sigma_{j}) = 
\trace\left( \sigma_{j}^{\dagger} \cdot M \right).
\end{equation}

For each variable $x$ we define the following matrix: 
$\ket{x}\bra{x} = \left[\begin{array}{cc}
(1-x) & 0 \\
0 & x \\
\end{array}\right]$. If $v$ is an input vertex of $C$ whose unique outgoing edge is labeled
with number $i$, then we the tensor $\atensor_v$ has index set $i$, and the value of $\atensor_v$
on each entry $\sigma_i\in \Pi$ is defined as follows.

\begin{equation}
\atensor_v(\sigma_{i}) = 
\trace\left( \ket{x}\bra{x}\cdot \sigma_i \right).
\end{equation}

Note that the tensor $\atensor_v$ has algebraic degree $1$. 
On the other hand if such an input vertex $v$ is labeled with a qubit $\ket{b}\in \{\ket{0},\ket{1}\}$, 
then the value of $\atensor_v$ on each entry $\sigma_i\in \Pi$ is defined as. 

\begin{equation}
\atensor_v(\sigma_{i}) = 
\trace\left( \ket{b}\bra{b}\cdot \sigma_i \right).
\end{equation}

We note that if all gates in $C$ act on at most $k$ qubits, then 
the tensor network $\tensornetwork_{C}$ has rank at most $2k$. Additionally, 
the graph $G(\tensornetwork)$ is isomorphic to the graph $\graph(C)$. Therefore, 
if $C$ has treewidth $t$, then $\graph(C)$ has also treewidth $t$. We also note 
tensors associated with input nodes of $C$ labeled with variables have 
algebraic degree $1$. All other tensors have algebraic degree $0$. Therefore,
$\tensornetwork_{C}$ has algebraic degree $1$. $\square$

\end{document}